\documentclass[letterpaper,journal,comsoc]{IEEEtran}

\usepackage[T1]{fontenc}
\usepackage{amsmath,amsfonts}
\usepackage[cmintegrals]{newtxmath}
\usepackage{bm}
\usepackage{algorithmic}
\usepackage{algorithm}
\usepackage{array}
\usepackage[caption=false,font=normalsize,labelfont=sf,textfont=sf]{subfig}
\usepackage{textcomp}
\usepackage{stfloats}
\usepackage{url}
\usepackage{verbatim}
\usepackage{graphicx}
\usepackage{cite}
\usepackage[colorlinks,anchorcolor=blue,linkcolor=blue,urlcolor=blue,citecolor=blue]{hyperref}
\usepackage{booktabs}
\usepackage{makecell}
\usepackage{multirow}
\usepackage{tabularx}
\usepackage{epsfig}
\usepackage{orcidlink}

\usepackage{amssymb}
\usepackage{amsthm}

\newcommand{\method}{QP-BD3QN-RACH}
\newcommand{\Nc}{N_{\mathrm{CBRA}}}
\newcommand{\Bset}{\mathcal{B}}
\newcommand{\Aset}{\mathcal{A}}

\newcommand{\Wset}{\mathcal{W}}
\newcommand{\Gset}{\mathcal{G}}
\newcommand{\Tset}{\mathcal{T}}
\newcommand{\Dset}{\mathcal{D}}
\newcommand{\qvec}{\mathbf{q}}

\newcommand{\xvec}{\mathbf{x}}
\newcommand{\svec}{\mathbf{s}}
\newcommand{\avec}{\mathbf{a}}
\newcommand{\hvec}{\mathbf{h}}

\newcommand{\E}{\mathbb{E}}

\newcolumntype{Y}{>{\raggedright\arraybackslash}X}

\newcommand{\etal}{~\textit{et~al}. }

\newtheorem{assumption}{Assumption}

\newtheorem{proposition}{Proposition}

\begin{document}

\title{QoS-Aware RACH Preamble Slicing via Quota-Projected Branching Deep Reinforcement Learning}

\author{
	Jiulin~Guo$^{\dagger\,\orcidlink{0009-0001-3908-8834}}$,~\IEEEmembership{Student Member,~IEEE}, Jiahan~Xu$^{\dagger\,\orcidlink{0009-0008-8987-1406}}$, Jiashuo~Zhang$^{\dagger\,\orcidlink{0009-0003-8596-4991}}$, Heng~Yang$^{\orcidlink{0000-0003-0417-6313}}$,~\IEEEmembership{Member,~IEEE}, Yizhen~Sun$^{\orcidlink{0009-0001-2477-2502}}$, Yutong~Xie$^{\orcidlink{0009-0008-9155-2383}}$, Shanshan~Li$^{\orcidlink{0000-0001-8563-7198}}$, Zhenyu~Liu$^{\orcidlink{0009-0001-0346-2433}}$, and Lei~Zhang$^{\orcidlink{0009-0008-4357-2144}}$%

	\thanks{This work was supported in part by the Natural Science Foundation of Liaoning Province under Grant 20180520022, Grant 2024MSLH359 and Grant 2025MSLH547; in part by the Scientific Research Fund of Liaoning Provincial Education Department under Grant LJ212410142021, Grant LJ222410142060 and Grant LJ212510142045; and in part by the University-Industry Collaborative Education Program of the Ministry of Education under Grant 231106665072727 and Grant 231106665030834. \textit{(Corresponding authors: Heng Yang; Shanshan Li.)}}%

	\thanks{Jiulin~Guo, Jiahan~Xu, Jiashuo~Zhang, Heng~Yang, Yizhen~Sun, Yutong~Xie, Zhenyu~Liu, and Lei~Zhang are with the School of Information Science and Engineering, Shenyang University of Technology, Shenyang 110870, Liaoning, China (e-mail: j.guo@smail.sut.edu.cn; jhxu@smail.sut.edu.cn; jszhang@smail.sut.edu.cn; h.yang@sut.edu.cn; yzsun@smail.sut.edu.cn; ytxie@smail.sut.edu.cn; liuzhenyu@sut.edu.cn; zhanglei@smail.sut.edu.cn).}%

	\thanks{Shanshan~Li is with the School of Mechanical Engineering, Shenyang University of Technology, Shenyang 110870, Liaoning, China (e-mail: sgdlss@sut.edu.cn).}%

	\thanks{$^{\dagger}$Jiulin Guo, Jiahan Xu, and Jiashuo Zhang contributed equally.}%
}

%\markboth{IEEE Transactions on Cognitive Communications and Networking,~Draft Manuscript}{Guo \MakeLowercase{\textit{et al.}}: QoS-Aware RACH Preamble Slicing via Quota-Projected Branching DRL}

\maketitle

\begin{abstract}
	Quality-of-service (QoS)-aware random access requires adaptive allocation of a finite random access channel (RACH) preamble budget across heterogeneous traffic and access procedures. This paper proposes QP-BD3QN-RACH, a quota-projected branching deep reinforcement learning controller for mixed two-step (2RA) and four-step (4RA) contention-based random access. Four action branches correspond to the delay-sensitive and delay-tolerant 2RA/4RA preamble pools. A branching dueling Double DQN selects pool-specific multipliers, and deterministic quota projection converts them to nonnegative integer allocations that preserve the preamble budget. With five actions per branch, the controller represents 625 pre-projection branch-action tuples using 20 branch-action outputs. Evaluation covers five arrival loads, cross-method comparison under nominal seed 42, six-seed sensitivity of QP-BD3QN-RACH, and targeted ablations. Across the five-load grid, its mean direction-aligned differences relative to four comparators are positive: 5.74 to 8.21 percentage points for success/collision, 1.23 to 1.92 percentage points for fallback, 0.35 to 0.68 percentage points for blocking, and 0.128 to 0.456 decision intervals for successful-access delay. Load-wise results exhibit metric-dependent tradeoffs, particularly under intermediate and overload conditions.
\end{abstract}

\begin{IEEEkeywords}
	Action branching, contention-based random access, deep reinforcement learning, massive IoT, preamble slicing, QoS-aware access control, quota projection, random access channel.
\end{IEEEkeywords}

\section{Introduction}\label{sec:introduction}

\IEEEPARstart{M}{assive} Internet of Things (IoT) networks must accommodate dense sporadic access while sustaining timely network entry for latency-sensitive services. In New Radio (NR), contention-based random access (CBRA) allows unscheduled user equipment (UE) to initiate access by selecting random access channel (RACH) preambles, with contention resolution (CR) incorporated into the access procedure when multiple UEs select the same preamble. The 3GPP service-accessibility specification defines service-level requirements for heterogeneous services~\cite{3gpp.22.011}. The NR architecture specifies two-step random access (2RA) and four-step random access (4RA) procedures~\cite{3gpp.38.300}, while the NR medium access control (MAC) specification defines contention-resolution and retransmission behavior~\cite{3gpp.38.321}. When heterogeneous human-to-human (H2H) and machine-to-machine (M2M) arrivals share a finite preamble budget, repeated contention can propagate into fallback, backlog accumulation, and blocking. These characteristics motivate adaptive preamble allocation across traffic classes and access procedures.

Accordingly, a central line of RACH research has focused on adaptive resource control at the MAC layer. Turan\etal learn class-specific access class barring (ACB) rates for radio access network slices~\cite{turan2021adaptive_acb_slicing}; Lee\etal combine preamble expansion with resource-allocation waiting to mitigate access failures~\cite{lee2021enhanced}; Althumali\etal adapt service-class preamble partitions according to traffic load~\cite{althumali2022priority}; and Chowdhury and De prioritize ACB using queue occupancy~\cite{chowdhury2022queue}. Gedikli\etal further use deep reinforcement learning (DRL) to allocate preamble subsets among service classes~\cite{gedikli2022flexible_preamble}, while Piao and Lee jointly configure 2RA/4RA resources for heterogeneous IoT traffic~\cite{piao2024integrated_ra}. Together, these studies establish dynamic RACH resource partitioning as an effective means of accommodating heterogeneous access demands.

Beyond direct MAC-layer resource partitioning, access performance has also been improved through signaling, receiver processing, and alternative access structures. Jiang\etal introduce super-preambles for grant-free massive-MIMO access~\cite{jiang2019multiple}; Jang\etal classify collided preambles and timing-advance values using deep neural networks~\cite{jang2021cellular_ra_framework}; He and Ren exploit access-point clustering for pilot-collision resolution~\cite{he2022cluster}; and Khairy\etal optimize non-orthogonal multiple access (NOMA) transmission probabilities in dense multi-cell IoT networks~\cite{khairy2022data}. Liva and Polyanskiy survey unsourced multiple access as a coding-oriented framework for massive random access~\cite{liva2024unsourced}. More recent studies combine NOMA, 2RA, ACB, and contextual bandits for low-latency access~\cite{nie2025noma_rach}, employ reconfigurable-intelligent-surface-assisted grant-free transmission opportunities~\cite{marinello2025grantfree_ris}, or learn delay-oriented sensing-free access strategies~\cite{zhang2026delay_optimal_ra}. The present study addresses adaptive allocation of a finite preamble budget for mixed 2RA/4RA CBRA.

The time-varying nature of traffic, contention, and backlog further motivates learning-based RACH control. Tello-Oquendo\etal learn the ACB barring rate under mixed H2H/M2M traffic~\cite{tello2018rl_acb}; Pacheco-Paramo\etal apply double deep reinforcement learning to dynamic ACB~\cite{pacheco2019deep} and subsequently extend the control space to include random-access-opportunity periodicity for delay management~\cite{pacheco2020delay}. Bui and Pham jointly tune the barring factor and barring time with a dueling deep Q-network~\cite{bui2020energy_acb}. Jiang\etal coordinate learned ACB, backoff, and distributed-queuing controls~\cite{jiang2021decoupled_ra}, whereas Bai\etal employ a branching actor--critic for intelligent preamble selection~\cite{bai2021multiagent_ra}. Fan\etal combine hierarchical ACB/backoff with multi-agent DRL for prioritized access~\cite{fan2024joint_delay_energy}; Liu\etal apply a deep Q-network (DQN) to dynamic H2H/M2M RACH allocation~\cite{liu2024enhanced}; and Elmeligy\etal optimize priority-aware preamble-selection probabilities using a multi-armed bandit~\cite{elmeligy2025preamble_mab}. These studies establish access feedback as a basis for online adaptation, with control variables centered primarily on barring, access timing, and preamble selection.

The learning architecture must additionally accommodate incomplete network information and multidimensional discrete decisions. Mnih\etal establish DQN as a deep value-learning framework~\cite{mnih2015human}; Van Hasselt\etal introduce Double DQN (DDQN) to mitigate value overestimation~\cite{vanhasselt2016deep}; and Wang\etal develop dueling state-value/action-advantage decomposition~\cite{wang2016dueling}. Building on these foundations, Wang\etal formulate dynamic multichannel access as a partially observable Markov decision process and adapt DQN to time-varying channel dynamics~\cite{wang2018deep}. Tavakoli\etal address multidimensional discrete control through action branching with per-dimension value heads~\cite{tavakoli2018action_branching}. Nisioti and Thomos formulate adaptive irregular-repetition slotted ALOHA as a decentralized partially observable Markov decision process and accelerate Q-learning through virtual experience~\cite{nisioti2019fast}. Du\etal subsequently survey learning-based massive access and resource allocation as components of intelligent wireless-network management~\cite{du2020machine}, while Sohaib\etal apply a branching dueling Q-network to dynamic multichannel random access~\cite{sohaib2022dynamic}. These developments provide the methodological basis for observation-driven, scalable value learning over structured access decisions.

Taken together, the literature identifies three requirements pertinent to the present problem: adaptive allocation of a finite RACH resource, learning from aggregate access feedback, and scalable representation of multidimensional discrete decisions. \method{} addresses these requirements by associating one branch with each of four preamble pools defined by quality-of-service (QoS) class---delay-sensitive (DS) or delay-tolerant (DT)---and access mode---2RA or 4RA---and by projecting the selected branch multipliers to a feasible integer quota vector. Branch-wise value learning supports action selection, quota projection enforces the executable preamble budget, and subsequent access outcomes provide feedback for the next decision interval. The resulting architecture therefore couples the value-function representation directly to the controllable RACH allocation dimensions.

This paper makes the following contributions.
\begin{itemize}
	\item A finite-budget preamble-slicing formulation is established for mixed 2RA/4RA CBRA with four controllable QoS/access-mode pools.
	\item \method{} couples branch-wise value decisions with deterministic quota projection, producing nonnegative integer preamble allocations that preserve the CBRA budget.
	\item A procedure-level contention-resolution model is paired with explicit numerator/denominator definitions for success, collision, fallback, blocking, and successful-access delay.
	\item Cross-load method comparison, six-seed sensitivity analysis for \method{}, and targeted ablations characterize performance and design sensitivity under a common simulation framework.
\end{itemize}

The remainder of this paper is organized as follows.
Section~\ref{sec:system_problem} formulates the system and control problem,
Section~\ref{sec:method} presents \method{},
Section~\ref{sec:evaluation} reports the simulation evaluation, and
Section~\ref{sec:conclusion} summarizes the findings and limitations.

\section{System Model and Problem Formulation}\label{sec:system_problem}

\begin{table*}[!t]\rmfamily\scriptsize
	\renewcommand{\arraystretch}{1.03}
	\centering
	\caption{Main Notation Used Throughout the Manuscript}
	\label{tab:notation}
	\begin{tabularx}{\textwidth}{@{}lX@{\qquad}lX@{}}
		\toprule
		\textbf{Notation}              & \textbf{Description}                    & \textbf{Notation}               & \textbf{Description}              \\
		\midrule
		$\Nc$                          & Number of CBRA preambles                & $\Gset$                         & QoS/access-mode preamble-control groups    \\
		$\Bset$                        & Branch set of \method{}                 & $\Wset$                         & Candidate branch multiplier set       \\
		$\mathcal{U}_t$                & Active UE set at decision interval $t$  & $\mathcal{N}_t$                 & New-arrival UE set                \\
		$\mathcal{L}_t$                & Backlog carried into decision interval $t$        & $\mathcal{I}_t$                 & ACB-allowed attempted UE set      \\
		$A_{u,t}$                      & ACB allow indicator of UE $u$           & $p_t$                           & Exogenous ACB allow probability             \\
		$m_{u,t}$                      & Current access mode of UE $u$           & $B_{u,t}$                       & Four-step blocking indicator      \\
		$F_2,F_4$                      & Failure thresholds for 2RA fallback and 4RA blocking & $g(u,t)$                        & QoS/access-mode group of UE $u$            \\
		$N^{\mathrm{act}}_t$           & Active UE count                         & $N^{\mathrm{att}}_t$            & Attempted UE count                \\
		$N^{\mathrm{succ}}_t$          & Post-CR successful-UE count             & $N^{\mathrm{coll}}_t$           & Post-CR unresolved-UE count       \\
		$N^{\mathrm{fb}}_t$            & Two-step fallback count                 & $N^{\mathrm{blk}}_t$            & Four-step blocking count          \\
		$N^{\mathrm{res}}_t$           & UEs resolved from contended buckets     & $N^{\mathrm{unres}}_t$          & UEs unresolved in contended buckets \\
		$\qvec_t$                      & Integer preamble quota vector           & $\avec_t$                       & Branch action vector              \\
		$n_t^{\mathrm{rem}}$           & Residual preamble count after flooring  & $i$                             & Preamble index                    \\
		$\psi_{u,t}$                   & Preamble selected by UE $u$             & $\pi$                           & Observation-based control policy                    \\
		$\boldsymbol{\rho}$            & Baseline preamble-share vector          & $\Pi_{\Nc}(\cdot)$              & Quota-projection operator         \\
		$\svec_t$                      & Aggregate controller observation        & $r_t$                           & Scalar reward                     \\
		$\boldsymbol{\chi}_t$          & Latent UE-level simulator state         & $\gamma$                        & Discount factor                   \\
		$\theta,\theta^-$              & Online and target network parameters    & $\Dset$                         & Replay memory                     \\
		$\widehat{\eta}_{\mathrm{s}}$  & Success per attempt                     & $\widehat{\eta}_{\mathrm{c}}$   & Collision per attempt             \\
		$\widehat{\eta}_{\mathrm{fb}}$ & Fallback per active 2RA UE              & $\widehat{\eta}_{\mathrm{blk}}$ & Blocking per active 4RA UE        \\
		$\widehat{D}_{\mathrm{succ}}$  & Delay per successful UE                 & $\Tset_{\mathrm{tail}}$         & Tail set of epoch--interval pairs           \\
		$\Lambda$                      & Evaluated arrival-load set              & $\mathcal{Z}$                   & Seed set for proposed-method diagnostics \\
		$M,t_{\mathrm{warm}}$          & Mini-batch size and global warm-up boundary & $O$                         & Value-network output dimension    \\
		$n_{\mathrm{step}},n_{\mathrm{upd}}$ & Interaction and update counters       & $U$                            & Target-network update period      \\
		\bottomrule
	\end{tabularx}
\end{table*}

This section develops the CBRA system model, procedure-level contention accounting, performance measures, and the partially observed preamble-slicing problem. Table~\ref{tab:notation} summarizes the main notation, and Fig.~\ref{fig:scenario} illustrates the scenario interpretation and access process.

\subsection{Network, Traffic, and Decision-Interval Model}

Consider one gNB serving H2H and M2M UEs over $\Nc$ CBRA preambles.
Each simulation epoch comprises $T$ RACH decision intervals indexed by
$t\in\{1,\ldots,T\}$. The epoch index $e$ is suppressed in the
interval-level model and included explicitly in cross-epoch aggregation.
The active UE set in an interval is
\begin{equation}
	\mathcal{U}_t=\mathcal{N}_t\cup\mathcal{L}_t,
	\qquad
	\mathcal{N}_t\cap\mathcal{L}_t=\varnothing,
	\label{eq:active_set}
\end{equation}
where $\mathcal{N}_t$ contains new arrivals and $\mathcal{L}_t$ contains UEs carried over from preceding intervals.

Each UE $u$ is characterized by a QoS class $c_u\in\{\mathrm{DS},\mathrm{DT}\}$, a user type $h_u\in\{\mathrm{H2H},\mathrm{M2M}\}$, and a current access mode $m_{u,t}\in\{2\mathrm{RA},4\mathrm{RA}\}$. The user-type and QoS labels are distinct model attributes; their evaluated joint distribution is specified in Section~\ref{sec:evaluation}. The proposed controller allocates preambles over four QoS/access-mode groups
\begin{equation}
	\Gset=
	\{\mathrm{DS2},\mathrm{DT2},\mathrm{DS4},\mathrm{DT4}\},
	\label{eq:group_set}
\end{equation}
with
\begin{equation}
	g(u,t)=
	\begin{cases}
		\mathrm{DS2}, & c_u=\mathrm{DS},\;m_{u,t}=2\mathrm{RA},\\
		\mathrm{DT2}, & c_u=\mathrm{DT},\;m_{u,t}=2\mathrm{RA},\\
		\mathrm{DS4}, & c_u=\mathrm{DS},\;m_{u,t}=4\mathrm{RA},\\
		\mathrm{DT4}, & c_u=\mathrm{DT},\;m_{u,t}=4\mathrm{RA}.
	\end{cases}
	\label{eq:group_map}
\end{equation}
The proposed controller assigns quotas by QoS and current access mode;
user type enters the traffic model and subgroup statistics.

Fig.~\ref{fig:scenario} summarizes the modeled access sequence from arrivals and ACB gating to preamble contention, procedure-level resolution, fallback, blocking, and backlog carryover.
\begin{figure*}[!t]
	\centering
	\includegraphics[width=\textwidth]{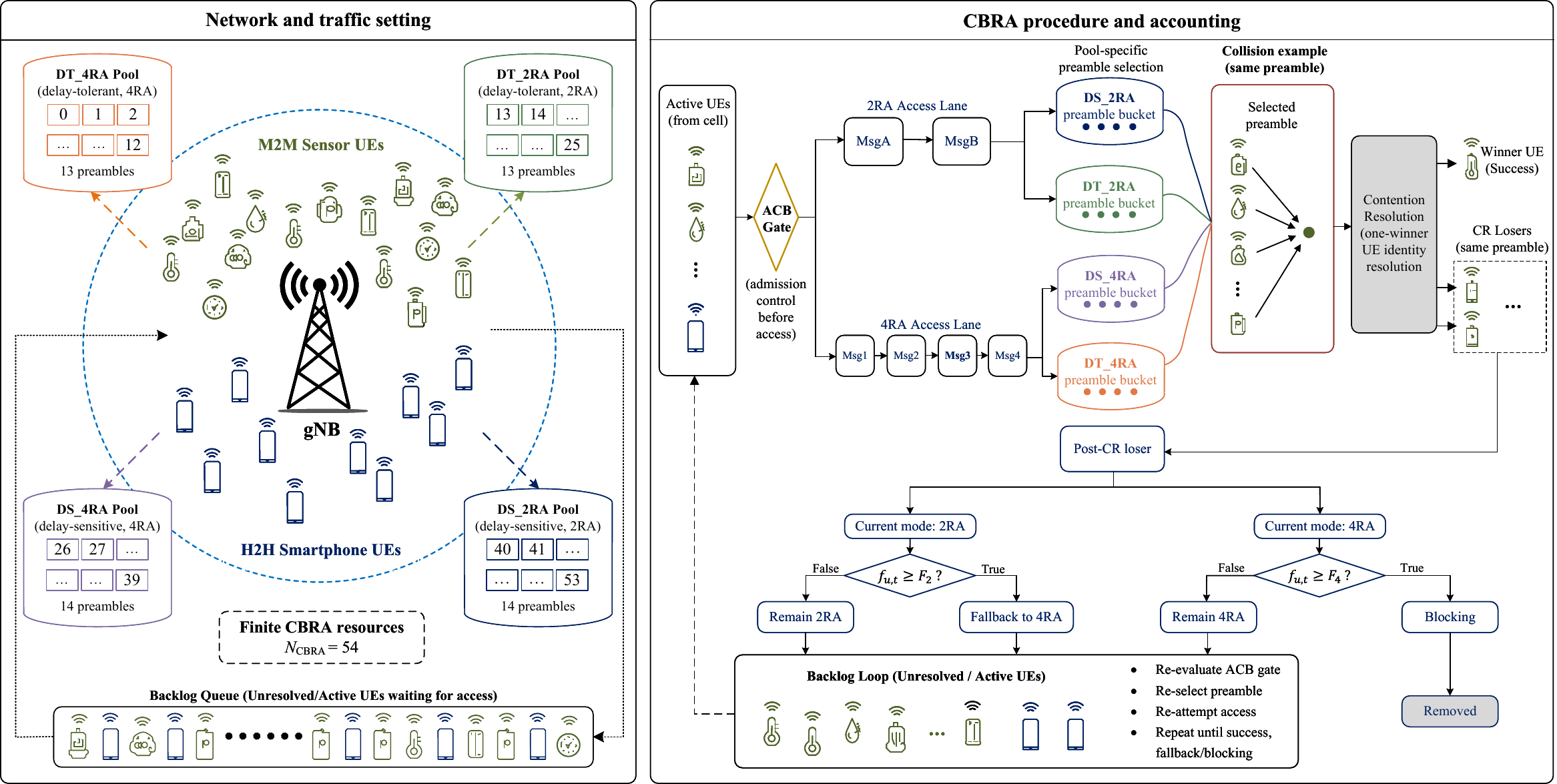}
	\caption{System model for QoS-aware CBRA preamble slicing with H2H/M2M
		arrivals, ACB gating, 2RA/4RA access, procedure-level contention
		resolution, fallback, blocking, and backlog carryover. The displayed
		13/14/14/13 quotas illustrate one feasible preamble allocation.}
	\label{fig:scenario}
\end{figure*}
A gNB-side controller updates the four preamble quotas once per decision interval using aggregate traffic, preceding quota allocation, recent access outcomes, and backlog information.

\begin{assumption}[Controller information]
	Before current-interval arrivals are introduced, the controller receives
	pending-UE counts by QoS/access mode and by access mode/user type,
	preceding quota shares, the nominal arrival load, and recent access
	outcomes. Action selection precedes the introduction of new UEs.
	\label{assump:observability}
\end{assumption}

\subsection{ACB Gate and Attempt Process}

Before preamble selection, each active UE passes through an ACB gate. Let
$A_{u,t}\in\{0,1\}$ denote the access decision, with
\begin{equation}
	\Pr(A_{u,t}=1)=p_t,
	\qquad 0\le p_t\le1,
	\label{eq:acb_gate}
\end{equation}
where $p_t$ is an exogenous ACB allow probability. The set of UEs admitted to
the current access attempt is
\begin{equation}
	\mathcal{I}_t=
	\{u\in\mathcal{U}_t:A_{u,t}=1\},
	\label{eq:attempt_set}
\end{equation}
and
\begin{equation}
	N_t^{\mathrm{act}}=|\mathcal{U}_t|,
	\qquad
	N_t^{\mathrm{att}}=|\mathcal{I}_t|.
	\label{eq:active_attempt_count}
\end{equation}
The evaluated value of $p_t$ is specified with the simulation configuration in
Section~\ref{sec:evaluation}.

\subsection{Two-Step and Four-Step Procedure Dynamics}

The simulator assigns each arriving UE an initial access mode according to the traffic-generation rule specified in Section~\ref{sec:evaluation}.
For overhead accounting, each 2RA attempt is assigned a nominal message
cost of two and each 4RA attempt a cost of four:
\begin{equation}
	C^{\mathrm{msg}}_{u,t}=
	\begin{cases}
		2, & m_{u,t}=2\mathrm{RA},\\
		4, & m_{u,t}=4\mathrm{RA}.
	\end{cases}
	\label{eq:msg_cost_user}
\end{equation}
The aggregate nominal message cost in interval $t$ is
\begin{equation}
	C^{\mathrm{msg}}_t
	=
	\sum_{u\in\mathcal{I}_t}C^{\mathrm{msg}}_{u,t}.
	\label{eq:msg_cost_total}
\end{equation}
This quantity summarizes nominal procedure overhead for the diagnostic analysis.

Each UE maintains a per-mode counter initialized to zero at arrival.
The counter increases once per unsuccessful decision interval, including
ACB denial and unresolved preamble contention, and resets upon successful
resolution or 2RA fallback. In the threshold tests below, $f_{u,t}$ denotes
the value after the current outcome is recorded and before any fallback
reset. The mode and blocking updates are
\begin{align}
	m_{u,t+1}
	&=
	\begin{cases}
		4\mathrm{RA}, & m_{u,t}=2\mathrm{RA},\;f_{u,t}\ge F_2,\\
		m_{u,t},      & \text{otherwise},
	\end{cases}
	\label{eq:fallback_mode}\\
	B_{u,t+1}
	&=
	\mathbf{1}
	\{m_{u,t}=4\mathrm{RA},\;f_{u,t}\ge F_4\},
	\label{eq:blocking_indicator}
\end{align}
where $F_2$ and $F_4$ are the per-mode unsuccessful-interval thresholds.
The resulting event counts are
\begin{align}
	N_t^{\mathrm{fb}}
	&=
	\left|
	\{u\in\mathcal{U}_t:
	m_{u,t}=2\mathrm{RA},\;
	m_{u,t+1}=4\mathrm{RA}\}
	\right|,
	\label{eq:fallback_count}\\
	N_t^{\mathrm{blk}}
	&=
	\sum_{u\in\mathcal{U}_t}B_{u,t+1}.
	\label{eq:blocking_count}
\end{align}
Successfully resolved and blocked UEs leave the active population; all remaining UEs form the next backlog:
\begin{equation}
	\mathcal{L}_{t+1}
	=
	\{u\in\mathcal{U}_t:
	u\notin\mathcal{S}^{\mathrm{succ}}_t,\;
	B_{u,t+1}=0\},
	\label{eq:backlog_transition}
\end{equation}
where $\mathcal{S}^{\mathrm{succ}}_t$ is the set of successfully resolved UEs.

\subsection{Procedure-Level Contention-Resolution Accounting}

For a nonempty quota pool, each admitted UE independently selects one
preamble uniformly from that pool. Using disjoint pool indices within
$\{1,\ldots,\Nc\}$, let $\psi_{u,t}$ denote the selected preamble and define
\begin{equation}
	\mathcal{P}_{t,i}
	=
	\{u\in\mathcal{I}_t:\psi_{u,t}=i\}.
	\label{eq:preamble_bucket}
\end{equation}

\begin{assumption}[Contention-resolution accounting]
	A singleton preamble bucket contributes one resolved UE. For each
	non-singleton bucket, one UE is recorded as resolved and the remaining UEs are
	recorded as unresolved for the current attempt. The accounting operates at the
	random-access procedure level; physical-layer capture and multi-user decoding
	are not modeled.
	\label{assump:procedure_resolution}
\end{assumption}

Let
\begin{align}
	N_t^{\mathrm{sing}}
	&=
	\left|
	\{i:|\mathcal{P}_{t,i}|=1\}
	\right|,
	\label{eq:singleton_count}\\
	N_t^{\mathrm{cb}}
	&=
	\left|
	\{i:|\mathcal{P}_{t,i}|\ge2\}
	\right|.
	\label{eq:contended_bucket_count}
\end{align}
Under Assumption~\ref{assump:procedure_resolution},
\begin{align}
	N_t^{\mathrm{res}}
	&=N_t^{\mathrm{cb}},
	\label{eq:resolved_contended_count}\\
	N_t^{\mathrm{unres}}
	&=
	\sum_{i:|\mathcal{P}_{t,i}|\ge2}
	\left(|\mathcal{P}_{t,i}|-1\right).
	\label{eq:unresolved_contended_count}
\end{align}
The success and collision counts used throughout the manuscript are therefore
\begin{align}
	N_t^{\mathrm{succ}}
	&=
	N_t^{\mathrm{sing}}+N_t^{\mathrm{res}},
	\label{eq:success_accounting}\\
	N_t^{\mathrm{coll}}
	&=
	N_t^{\mathrm{unres}}.
	\label{eq:collision_accounting}
\end{align}
Here, $N_t^{\mathrm{coll}}$ denotes UEs remaining unresolved after the procedure-level contention accounting.
When every attempted group has a nonempty pool, each attempted UE is
recorded once as resolved or unresolved, giving
\begin{equation}
	N_t^{\mathrm{att}}
	=
	N_t^{\mathrm{succ}}
	+
	N_t^{\mathrm{coll}}.
	\label{eq:post_cr_identity}
\end{equation}

\subsection{Performance Metrics and Aggregation}\label{sec:kpi}

For an analysis set $\Tset$ of epoch--interval pairs $(e,t)$, define
\begin{equation}
	\widehat{R}_{x/y}(\Tset)
	=
	\frac{\sum_{(e,t)\in\Tset}x_{e,t}}
	{\sum_{(e,t)\in\Tset}y_{e,t}},
	\label{eq:ratio_of_sums}
\end{equation}
for nonnegative quantities with a positive total denominator.
Numerators and denominators are pooled separately across the selected
intervals. Interval-level ratios used in observations and rewards are
assigned zero when their denominators vanish.

\subsubsection{Primary Metrics}

Let
\begin{equation}
	N_t^{\mathrm{act},m}
	=
	\left|
	\{u\in\mathcal{U}_t:m_{u,t}=m\}
	\right|,
	\qquad
	m\in\{2\mathrm{RA},4\mathrm{RA}\},
	\label{eq:active_mode_count}
\end{equation}
Let $\tau_u$ be the arrival interval of UE $u$. Successful resolution in
interval $t$ gives $d_{u,t}=t-\tau_u$, measured in decision intervals.
Success in the arrival interval has zero delay, and waiting during ACB
denial contributes to the elapsed delay. Mode-specific delay summaries
group successful UEs by initial access mode.
The successful-user delay sum is
\begin{equation}
	D_t^{\mathrm{succ}}
	=
	\sum_{u\in\mathcal{S}^{\mathrm{succ}}_t}d_{u,t}.
	\label{eq:delay_success_sum}
\end{equation}
The primary metrics are
\begin{subequations}\label{eq:primary_metric_estimators}
	\begin{align}
		\widehat{\eta}_{\mathrm{s}}(\Tset)
		&=
		\widehat{R}_{N^{\mathrm{succ}}/N^{\mathrm{att}}}(\Tset),
		\label{eq:success_per_attempt}\\
		\widehat{\eta}_{\mathrm{c}}(\Tset)
		&=
		\widehat{R}_{N^{\mathrm{coll}}/N^{\mathrm{att}}}(\Tset),
		\label{eq:collision_per_attempt}\\
		\widehat{\eta}_{\mathrm{fb}}(\Tset)
		&=
		\widehat{R}_{N^{\mathrm{fb}}/
			N^{\mathrm{act},2\mathrm{RA}}}(\Tset),
		\label{eq:fallback_per_2ra}\\
		\widehat{\eta}_{\mathrm{blk}}(\Tset)
		&=
		\widehat{R}_{N^{\mathrm{blk}}/
			N^{\mathrm{act},4\mathrm{RA}}}(\Tset),
		\label{eq:blocking_per_4ra}\\
		\widehat{D}_{\mathrm{succ}}(\Tset)
		&=
		\widehat{R}_{D^{\mathrm{succ}}/N^{\mathrm{succ}}}(\Tset).
		\label{eq:delay_per_success}
	\end{align}
\end{subequations}
Equation~\eqref{eq:post_cr_identity} further gives
\begin{equation}
	\widehat{\eta}_{\mathrm{s}}(\Tset)
	+
	\widehat{\eta}_{\mathrm{c}}(\Tset)
	=1
\end{equation}
for horizons containing at least one attempted UE and nonempty pools for
all attempted groups. The evaluated configurations satisfy this condition.

\subsubsection{Diagnostic Metrics}

Active and attempted access pressure per preamble are
\begin{align}
	\widehat{\rho}_{\mathrm{act}}(\Tset)
	&=
	\frac{\sum_{(e,t)\in\Tset}N_{e,t}^{\mathrm{act}}}
	{|\Tset|\Nc},
	\label{eq:active_per_preamble}\\
	\widehat{\rho}_{\mathrm{att}}(\Tset)
	&=
	\frac{\sum_{(e,t)\in\Tset}N_{e,t}^{\mathrm{att}}}
	{|\Tset|\Nc}.
	\label{eq:attempt_per_preamble}
\end{align}
Let $N_t^{\mathrm{cont,UE}}$ denote the number of attempted UEs belonging to non-singleton preamble buckets before procedure-level resolution.
The corresponding contention pressure is
\begin{equation}
	\widehat{\eta}_{\mathrm{cont}}(\Tset)
	=
	\widehat{R}_{N^{\mathrm{cont,UE}}/N^{\mathrm{att}}}(\Tset).
	\label{eq:contention_pressure}
\end{equation}
Mean post-interval backlog over the tail window is
\begin{equation}
	\widehat{L}_{\mathrm{tail}}
	=
	\frac{1}{|\Tset_{\mathrm{tail}}|}
	\sum_{(e,t)\in\Tset_{\mathrm{tail}}}
	|\mathcal{L}_{e,t+1}|.
	\label{eq:backlog_tail}
\end{equation}
Here $\mathcal{L}_{e,t+1}$ is the backlog after interval $t$ of epoch $e$;
$\mathcal{L}_{e,T+1}$ records the final-interval backlog before the next
epoch resets.
Nominal message cost is reported per attempt and per successful UE:
\begin{align}
	\widehat{C}^{\mathrm{att}}(\Tset)
	&=
	\widehat{R}_{C^{\mathrm{msg}}/N^{\mathrm{att}}}(\Tset),
	\label{eq:msg_pressure_attempt}\\
	\widehat{C}^{\mathrm{succ}}(\Tset)
	&=
	\widehat{R}_{C^{\mathrm{msg}}/N^{\mathrm{succ}}}(\Tset).
	\label{eq:msg_pressure_success}
\end{align}
These quantities are used as secondary diagnostics for access pressure, backlog, contention, and signaling overhead.

\subsection{Partially Observed Preamble-Slicing Problem}\label{sec:problem}

\subsubsection{Controller Observation}

Let $\boldsymbol{\chi}_t$ denote the latent UE-level simulator state containing the variables required for access-mode evolution, failure histories, realized contention, and backlog evolution.
Under Assumption~\ref{assump:observability}, the controller receives only the aggregate observation $\svec_t$. The policy is therefore written as
\begin{equation}
	\pi(\avec_t\mid\svec_t),
	\label{eq:observation_policy}
\end{equation}
with no requirement that $\svec_t$ constitute a complete Markov state.

Let $\lambda$ denote the nominal number of new UEs per interval and set
\begin{equation}
	\begin{aligned}
		H_\lambda&=\max\{\Nc,5\lambda,1\},\\
		x_{t,g}^{\mathrm{quota}}&=q_{t,g}^{\mathrm{prev}}/\Nc.
	\end{aligned}
	\label{eq:observation_group_norm}
\end{equation}
The pending subsets $\mathcal{L}_{t,g}$ and $\mathcal{L}_{t,m,h}$ group
UEs by QoS/current access mode and by current access mode/user type,
respectively. Their normalized counts form
\begin{equation}
	\begin{aligned}
		\xvec_t^{\mathrm{pend}}=\big[&
		(|\mathcal{L}_{t,g}|/H_\lambda)_{g\in\Gset},\\
		&(|\mathcal{L}_{t,m,h}|/H_\lambda)_{m,h},
		|\mathcal{L}_t|/H_\lambda\big].
	\end{aligned}
	\label{eq:observation_backlog}
\end{equation}
Group entries follow DS2, DT2, DS4, DT4. Mode/type entries follow
2RA--H2H, 2RA--M2M, 4RA--H2H, 4RA--M2M.
The preceding global outcomes are
\begin{equation}
	\begin{aligned}
		\xvec_t^{\mathrm{perf}}=\big[&
		\eta_{\mathrm{s},t-1}^{\mathrm{act}},
		\eta_{\mathrm{c},t-1},\eta_{\mathrm{blk},t-1},\\
		&\eta_{\mathrm{fb},t-1},\bar{D}_{t-1}/10\big],
	\end{aligned}
	\label{eq:observation_perf}
\end{equation}
where $\eta_{\mathrm{s},t}^{\mathrm{act}}
=N_t^{\mathrm{succ}}/\max(N_t^{\mathrm{act}},1)$ and
$\bar{D}_t=D_t^{\mathrm{succ}}/\max(N_t^{\mathrm{succ}},1)$.
The other rate entries use their interval-level denominators from
Section~\ref{sec:kpi}. DS-specific outcomes form
\begin{equation}
	\begin{aligned}
		\xvec_t^{\mathrm{DS}}=\big[&
		(\eta_{\mathrm{blk},t-1}^{\mathrm{DS4},h})_h,
		(\eta_{\mathrm{fb},t-1}^{\mathrm{DS2},h})_h,\\
		&(\bar{D}_{t-1}^{\mathrm{DS4},h}/10)_h,
		(\bar{D}_{t-1}^{\mathrm{DS2},h}/10)_h\big],
	\end{aligned}
	\label{eq:observation_ds}
\end{equation}
with $h$ ordered as H2H, M2M. Subgroup rates use current-mode active
populations; subgroup delays use successful UEs grouped by initial mode.
The flattened observation is
\begin{equation}
	\begin{aligned}
		\svec_t=\big[&
		\xvec_t^{\mathrm{pend}},\xvec_t^{\mathrm{quota}},
		\lambda/100,\\
		&p_t^{\mathrm{prev}},\xvec_t^{\mathrm{perf}},
		\xvec_t^{\mathrm{DS}}\big]\in\mathbb{R}^{28},
	\end{aligned}
	\label{eq:observation_concat}
\end{equation}
where $\xvec_t^{\mathrm{quota}}=(x_{t,g}^{\mathrm{quota}})_{g\in\Gset}$
and $p_t^{\mathrm{prev}}$ is the preceding logged ACB probability.
The six blocks contain $9$, $4$, $1$, $1$, $5$, and $8$ entries,
respectively. At epoch initialization, pending counts and preceding
metric entries, including the previous-ACB feature, are zero; quota
features reflect the static allocation.

\subsubsection{Four-Branch Action Space}

The branch set follows the executable quota dimensions:
\begin{equation}
	\Bset=\Gset.
	\label{eq:branch_set}
\end{equation}
Each branch selects one multiplier from
\begin{equation}
	\Wset=\{0.50,0.75,1.00,1.25,1.50\},
	\label{eq:branch_values}
\end{equation}
giving
\begin{equation}
	\avec_t=(a_{t,b})_{b\in\Bset}\in
	\Aset=\Wset^{|\Bset|}.
	\label{eq:branch_action}
\end{equation}
The four branches therefore define
\begin{equation}
	N_{\mathrm{tuple}}^{(4)}
	=
	|\Wset|^{|\Bset|}
	=
	5^4
	=
	625
	\label{eq:branch_tuple_count}
\end{equation}
pre-projection branch-action tuples.

\subsubsection{Quota Projection}

Let $\boldsymbol{\rho}=(\rho_b)_{b\in\Bset}$ denote the fixed baseline preamble-share vector, with
\begin{equation}
	\rho_b\ge0,
	\qquad
	\sum_{b\in\Bset}\rho_b=1.
	\label{eq:baseline_share}
\end{equation}
For branch $b$, the selected multiplier produces the score
\begin{equation}
	z_{t,b}=a_{t,b}\rho_b.
	\label{eq:branch_score}
\end{equation}
Since all elements of $\Wset$ are positive, $\sum_{b\in\Bset}z_{t,b}>0$, and the normalized real-valued quota is
\begin{equation}
	\tilde{q}_{t,b}
	=
	\Nc
	\frac{z_{t,b}}
	{\sum_{j\in\Bset}z_{t,j}}.
	\label{eq:unrounded_quota}
\end{equation}
Initial integer quotas are obtained by flooring:
\begin{equation}
	q_{t,b}^{(0)}
	=
	\left\lfloor
	\tilde{q}_{t,b}
	\right\rfloor,
	\label{eq:floor_quota}
\end{equation}
leaving
\begin{equation}
	n_t^{\mathrm{rem}}
	=
	\Nc
	-
	\sum_{b\in\Bset}
	q_{t,b}^{(0)}
	\label{eq:quota_residual}
\end{equation}
preambles to be assigned.

Let $\prec_{\Bset}$ order the branches as DS2, DT2, DS4, DT4.
For $\ell=1,\ldots,n_t^{\mathrm{rem}}$, the residual allocation selects
\begin{equation}
	b_\ell^\star
	\in
	\underset{b\in\Bset:z_{t,b}>0}{\arg\min}
	\frac{q_{t,b}^{(\ell-1)}}{z_{t,b}},
	\label{eq:residual_ratio_choice}
\end{equation}
with ties resolved by $\prec_{\Bset}$, and updates
\begin{equation}
	q_{t,b}^{(\ell)}
	=
	q_{t,b}^{(\ell-1)}
	+
	\mathbf{1}\{b=b_\ell^\star\}.
	\label{eq:residual_ratio_update}
\end{equation}
The final allocation is
\begin{equation}
	q_{t,b}
	=
	q_{t,b}^{(n_t^{\mathrm{rem}})},
	\qquad b\in\Bset,
	\label{eq:final_quota}
\end{equation}
or, equivalently,
\begin{equation}
	\qvec_t
	=
	\Pi_{\Nc}
	(\avec_t,\boldsymbol{\rho}).
	\label{eq:projection_operator}
\end{equation}

\begin{proposition}[Quota feasibility]
	The projection
	$\Pi_{\Nc}$ returns an integer quota vector satisfying
	\begin{equation}
		q_{t,b}\in\mathbb{Z}_{\ge0},
		\qquad
		\sum_{b\in\Bset}q_{t,b}=\Nc.
		\label{eq:quota_feasible}
	\end{equation}
\end{proposition}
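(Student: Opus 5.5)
The plan is to check the three properties of the output in turn: well-definedness, nonnegativity and integrality, and the exact budget. All three follow from the construction in \eqref{eq:branch_score}--\eqref{eq:final_quota}; the only real content is the range of the residual count.

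\textbf{Well-definedness and the real-valued quotas.} Every element of $\Wset$ in \eqref{eq:branch_values} is positive. Together with \eqref{eq:baseline_share}, this gives $z_{t,b}=a_{t,b}\rho_b\ge0$ for every $b$, and
\begin{equation*}
\sum_{b\in\Bset}z_{t,b}\ge 0.50\sum_{b\in\Bset}\rho_b=0.50>0.
\end{equation*}
Hence \eqref{eq:unrounded_quota} is well defined. It yields $\tilde q_{t,b}\ge0$ and $\sum_b\tilde q_{t,b}=\Nc$ exactly.

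\textbf{The floored quotas.} From \eqref{eq:floor_quota}, each $q^{(0)}_{t,b}$ is a nonnegative integer. Since $\lfloor x\rfloor\le x<\lfloor x\rfloor+1$, summing over the four branches gives the bound on the residual in \eqref{eq:quota_residual}:
\begin{equation*}
0\le n_t^{\mathrm{rem}}=\sum_{b\in\Bset}\bigl(\tilde q_{t,b}-q^{(0)}_{t,b}\bigr)<|\Bset|=4.
\end{equation*}
Because $\Nc$ is an integer, $n_t^{\mathrm{rem}}$ is an integer in $\{0,1,2,3\}$. This makes the loop over $\ell=1,\ldots,n_t^{\mathrm{rem}}$ a finite, well-defined number of iterations.

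\textbf{The residual loop.} At each iteration, the index set $\{b\in\Bset: z_{t,b}>0\}$ in \eqref{eq:residual_ratio_choice} must be nonempty. This holds because $\sum_b z_{t,b}>0$. The $\arg\min$ is over a finite set and ties are broken by $\prec_{\Bset}$, so $b^\star_\ell$ is uniquely determined. Each update \eqref{eq:residual_ratio_update} adds exactly one unit to exactly one coordinate. By induction on $\ell$, the following two facts hold for $\ell=0,\ldots,n_t^{\mathrm{rem}}$:
\begin{equation*}
q^{(\ell)}_{t,b}\in\mathbb{Z}_{\ge0}\ \text{for all } b\in\Bset,
\qquad
\sum_{b\in\Bset}q^{(\ell)}_{t,b}=\sum_{b\in\Bset}q^{(0)}_{t,b}+\ell.
\end{equation*}
Setting $\ell=n_t^{\mathrm{rem}}$ and using \eqref{eq:quota_residual} gives $\sum_b q_{t,b}=\Nc$, which together with integrality and nonnegativity is \eqref{eq:quota_feasible}.

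The only step requiring care is showing that $n_t^{\mathrm{rem}}$ is a nonnegative integer, so that the loop is meaningful. This is settled by the floor bounds above and the integrality of $\Nc$. Everything else is bookkeeping.
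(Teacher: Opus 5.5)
Your proof is correct and follows the same route as the paper: flooring yields nonnegative integers summing to at most $\Nc$, and the $n_t^{\mathrm{rem}}$ single-unit increments raise the total to exactly $\Nc$. Your extra checks fill in details the paper leaves implicit: that $n_t^{\mathrm{rem}}\in\{0,1,2,3\}$ is a nonnegative integer, and that the $\arg\min$ in \eqref{eq:residual_ratio_choice} ranges over a nonempty set.
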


\begin{proof}
	Flooring produces nonnegative integer quotas whose sum is at most $\Nc$.
	The subsequent $n_t^{\mathrm{rem}}$ single-unit assignments increase the
	aggregate quota by exactly $n_t^{\mathrm{rem}}$, yielding a final sum of
	$\Nc$.
\end{proof}

Under the proposed four-pool formulation, each branch controls one
QoS/access-mode quota.

\subsubsection{Reward and Objective}

The scalar reward $r_t$ is a dimensionless training score combining access success, collision, fallback, blocking, delay, and DS-specific penalties:
\begin{align}
	r_t
	&=
	w_{\mathrm{s4}}\eta_{\mathrm{s4},t}
	+w_{\mathrm{s2}}\eta_{\mathrm{s2},t}
	+w_{\mathrm{c}}\eta_{\mathrm{c},t}
	\notag\\
	&\quad
	+w_{\mathrm{blk}}\eta_{\mathrm{blk},t}
	+w_{\mathrm{fb}}\eta_{\mathrm{fb},t}
	+w_{\mathrm{d4}}\frac{\bar{D}_{4,t}}{10}
	\notag\\
	&\quad
	+w_{\mathrm{d2}}\frac{\bar{D}_{2,t}}{10}
	+w_{\mathrm{sb}}\delta^{\mathrm{sen}}_{\mathrm{blk},t}
	+w_{\mathrm{sf}}\delta^{\mathrm{sen}}_{\mathrm{fb},t}
	+w_{\mathrm{sd}}
	\frac{\delta^{\mathrm{sen}}_{\mathrm{d},t}}{10}.
	\label{eq:reward}
\end{align}
Here $\eta_{\mathrm{s2},t}$ and $\eta_{\mathrm{s4},t}$ use current-mode
active populations. The delay summaries $\bar{D}_{2,t}$ and
$\bar{D}_{4,t}$ group successful UEs by initial access mode. Collision,
fallback, and blocking use the interval-level counterparts of the
ratios in Section~\ref{sec:kpi}.
DS-specific rates use the corresponding current-mode active populations,
and DS-specific delays retain initial-mode grouping. The DS-specific terms are
\begin{subequations}\label{eq:sensitive_reward_terms}
	\begin{align}
		\delta^{\mathrm{sen}}_{\mathrm{blk},t}
		&=
		0.5\,
		\eta_{\mathrm{blk},t}^{\mathrm{DS4,H2H}}
		+
		0.5\,
		\eta_{\mathrm{blk},t}^{\mathrm{DS4,M2M}},
		\\
		\delta^{\mathrm{sen}}_{\mathrm{fb},t}
		&=
		0.5\,
		\eta_{\mathrm{fb},t}^{\mathrm{DS2,H2H}}
		+
		0.5\,
		\eta_{\mathrm{fb},t}^{\mathrm{DS2,M2M}},
		\\
		\delta^{\mathrm{sen}}_{\mathrm{d},t}
		&=
		0.3\,
		\bar{D}_{t}^{\mathrm{DS4,H2H}}
		+
		0.2\,
		\bar{D}_{t}^{\mathrm{DS4,M2M}}
		\notag\\
		&\quad
		+
		0.3\,
		\bar{D}_{t}^{\mathrm{DS2,H2H}}
		+
		0.2\,
		\bar{D}_{t}^{\mathrm{DS2,M2M}}.
	\end{align}
\end{subequations}

The default coefficient vector is
\begin{equation}
	\begin{aligned}
		\mathbf{w}^{\mathrm{default}}
		&=
		(1.50,1.50,-0.50,-0.60,-0.60,\\
		&\quad -0.25,-0.40,-1.00,-0.75,-0.40),
	\end{aligned}
	\label{eq:mild_penalty}
\end{equation}
ordered according to the terms in \eqref{eq:reward}. Delay terms are scaled by $1/10$ before weighting.
The coefficient vector is fixed across the main comparisons; the reward-profile ablation is specified in Section~\ref{sec:evaluation}.
Performance comparison uses the primary metrics in Section~\ref{sec:kpi}.

The controller seeks an observation-based policy maximizing discounted
return over one $T$-interval epoch:
\begin{subequations}\label{eq:optimization}
	\begin{align}
		\max_{\pi}\quad
		&J(\pi)
		=
		\E_{\pi}
		\left[
		\sum_{t=1}^{T}
		\gamma^{t-1} r_t
		\right]
		\label{eq:objective}\\
		\text{s.t.}\quad
		&a_{t,b}\in\Wset,
		\qquad
		\forall b\in\Bset,
		\label{eq:c_action}\\
		&q_{t,b}\in\mathbb{Z}_{\ge0},
		\qquad
		\forall b\in\Bset,
		\label{eq:c_quota_nonnegative}\\
		&\sum_{b\in\Bset}q_{t,b}=\Nc,
		\label{eq:c_quota_sum}\\
		&\qvec_t=
		\Pi_{\Nc}
		(\avec_t,\boldsymbol{\rho}).
		\label{eq:c_projection}
	\end{align}
\end{subequations}
The environment transition follows the ACB, access-mode, contention-resolution, and backlog dynamics defined in the preceding subsections.

\section{The \method{} Controller}\label{sec:method}

\begin{figure*}[!t]
	\centering
	\includegraphics[width=\textwidth]{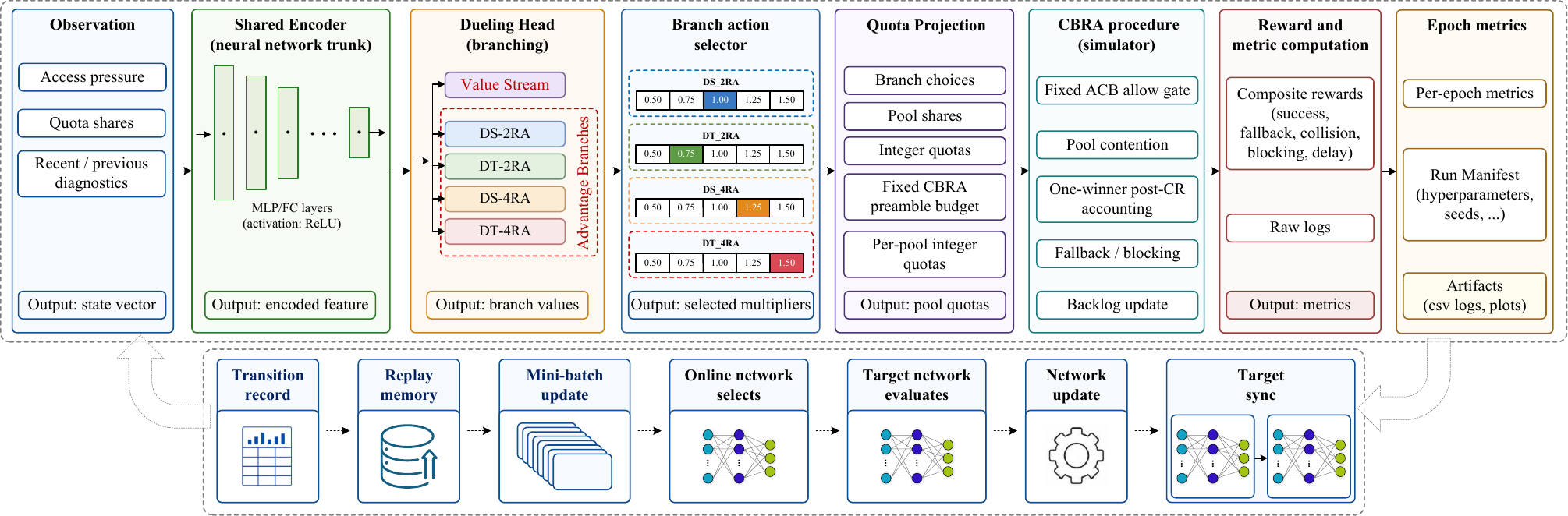}
	\caption{Control and learning architecture of \method{}. A shared encoder maps the aggregate observation to branch-wise action values; selected branch multipliers are projected to a feasible integer quota vector; and subsequent CBRA outcomes provide reward and next-observation feedback for replay-based Double DQN updates.}
	\label{fig:method}
\end{figure*}

\method{} maps the aggregate observation $\svec_t$ to one multiplier decision for each of the four executable preamble-quota dimensions defined in Section~\ref{sec:problem}.
As illustrated in Fig.~\ref{fig:method}, a shared encoder and branch-wise dueling value heads support action selection, while the deterministic operator $\Pi_{\Nc}$ converts the selected multipliers to a budget-feasible integer quota vector.
The resulting CBRA outcomes provide the reward and next observation used for replay-based learning.
This section specifies the branch-wise value representation, action selection, Double DQN update, training procedure, and output dimensionality.

\subsection{Branching Dueling Value Representation and Action Selection}

The controller processes $\svec_t$ with two fully connected hidden layers
and rectified linear unit (ReLU) activations. Each layer has width $128$
in the default configuration, and Adam minimizes the mini-batch loss.
The scalar value/action-advantage decomposition follows the dueling architecture of Wang\etal~\cite{wang2016dueling}.
The shared representation with one action branch per decision dimension follows the action-branching architecture introduced by Tavakoli\etal~\cite{tavakoli2018action_branching}.

The shared encoder produces
\begin{equation}
	\hvec_t=f_\theta(\svec_t),
	\label{eq:encoder}
\end{equation}
from which the network estimates a scalar value $V_\theta(\svec_t)$ and a branch-specific advantage $A_{\theta,b}(\svec_t,a_b)$ for each $b\in\Bset$ and $a_b\in\Wset$.
The corresponding branch action value is
\begin{equation}
	Q_{\theta,b}(\svec_t,a_b)
	=
	V_\theta(\svec_t)
	+
	A_{\theta,b}(\svec_t,a_b)
	-
	\frac{1}{|\Wset|}
	\sum_{a'\in\Wset}
	A_{\theta,b}(\svec_t,a').
	\label{eq:dueling_branch}
\end{equation}

The greedy action of branch $b$ is
\begin{equation}
	a^\star_{t,b}
	=
	\underset{a_b\in\Wset}{\arg\max}\;
	Q_{\theta,b}(\svec_t,a_b).
	\label{eq:greedy_branch}
\end{equation}

A single exploration draw $\xi_t\sim\mathrm{Uniform}(0,1)$ determines whether the complete branch vector follows exploration or greedy selection.
Conditional on exploration, each branch independently samples one action $a^{\mathrm{rnd}}_{t,b}$ uniformly from $\Wset$.
The executed action vector is
\begin{equation}
	\avec_t
	=
	\begin{cases}
		(a^{\mathrm{rnd}}_{t,b})_{b\in\Bset},
		& \xi_t<\varepsilon,\\[1mm]
		(a^\star_{t,b})_{b\in\Bset},
		& \xi_t\ge\varepsilon.
	\end{cases}
	\label{eq:epsilon_greedy}
\end{equation}

The selected branch vector is converted to the executable allocation by
\begin{equation}
	\qvec_t
	=
	\Pi_{\Nc}(\avec_t,\boldsymbol{\rho}),
	\label{eq:method_projection}
\end{equation}
using the projection defined in Section~\ref{sec:problem}.
Thus, value estimation is factorized across the four decision dimensions, while quota execution remains coupled through the common preamble budget.

\subsection{Branch-Specific Double DQN Update}

Action selection and target evaluation follow the Double DQN separation introduced by Van Hasselt\etal~\cite{vanhasselt2016deep}.
Among the temporal-difference target constructions examined for action branching, \method{} uses the branch-specific form described by Tavakoli\etal~\cite{tavakoli2018action_branching}.

For a replay transition $(\svec_t,\avec_t,r_t,\svec_{t+1},d_t)$,
with $d_t=\mathbf{1}\{t=T\}$ marking the end of an epoch,
the online network first selects
\begin{equation}
	a^+_{t+1,b}
	=
	\underset{a_b\in\Wset}{\arg\max}\;
	Q_{\theta,b}(\svec_{t+1},a_b).
	\label{eq:double_select}
\end{equation}
The target network then evaluates the selected action:
\begin{equation}
	y_{t,b}
	=
	r_t
	+
	\gamma(1-d_t)
	Q_{\theta^-,b}
	(\svec_{t+1},a^+_{t+1,b}).
	\label{eq:double_target}
\end{equation}
The branch-wise temporal-difference error is
\begin{equation}
	\delta_{t,b}
	=
	y_{t,b}
	-
	Q_{\theta,b}(\svec_t,a_{t,b}),
	\label{eq:td_error}
\end{equation}
and the transition loss is
\begin{equation}
	L_t^{\mathrm{TD}}(\theta)
	=
	\frac{1}{|\Bset|}
	\sum_{b\in\Bset}
	\delta_{t,b}^{\,2}.
	\label{eq:loss}
\end{equation}
For mini-batch training, $L_t^{\mathrm{TD}}(\theta)$ is averaged over the sampled transitions.
Parameter updates begin after the global interaction count exceeds $t_{\mathrm{warm}}$ and the replay memory contains at least $M$ transitions.
The target parameters $\theta^-$ are synchronized with the online parameters every $U$ gradient updates.

Algorithm~\ref{alg:training} summarizes the interaction and learning cycle.
Online and target parameters, replay memory, and global counters persist
across epochs; the simulator is reset at each epoch boundary.
Raw metric numerators and denominators are logged during simulation and
aggregated in Section~\ref{sec:evaluation} according to
\eqref{eq:ratio_of_sums}.

\begin{algorithm}[!t]
	\caption{Training Procedure for \method{}}
	\label{alg:training}\scriptsize
	\begin{algorithmic}[1]
		\REQUIRE Training epochs $E$, decision intervals per epoch $T$, branch set $\Bset$, candidate set $\Wset$, baseline shares $\boldsymbol{\rho}$, total preambles $\Nc$, exploration probability $\varepsilon$, discount factor $\gamma$, mini-batch size $M$, warm-up boundary $t_{\mathrm{warm}}$, target-update period $U$
		\ENSURE Trained online-network parameters $\theta$
		\STATE Initialize $\theta$ and set $\theta^-\leftarrow\theta$.
		\STATE Initialize replay memory $\Dset$ and counters $n_{\mathrm{step}}\leftarrow0$, $n_{\mathrm{upd}}\leftarrow0$.
		\FOR{$e=1,\ldots,E$}
		\STATE Reset the simulator to an empty backlog, static quotas,
		and zero preceding metrics.
		\FOR{$t=1,\ldots,T$}
		\STATE Receive aggregate observation $\svec_t$.
		\STATE During replay warm-up, sample each branch action independently and uniformly from $\Wset$; thereafter, select $\avec_t$ using \eqref{eq:greedy_branch}--\eqref{eq:epsilon_greedy}.
		\STATE Compute $\qvec_t=\Pi_{\Nc}(\avec_t,\boldsymbol{\rho})$.
		\STATE Execute the CBRA decision interval under $\qvec_t$ and obtain $r_t$, $\svec_{t+1}$, and terminal indicator $d_t$.
		\STATE Log the raw metric numerators and denominators.
		\STATE Store $(\svec_t,\avec_t,r_t,\svec_{t+1},d_t)$ in $\Dset$.
		\STATE Set $n_{\mathrm{step}}\leftarrow n_{\mathrm{step}}+1$.
		\IF{$n_{\mathrm{step}}>t_{\mathrm{warm}}$ and $|\Dset|\ge M$}
		\STATE Sample a mini-batch from $\Dset$ and minimize the mini-batch average of \eqref{eq:loss}.
		\STATE Set $n_{\mathrm{upd}}\leftarrow n_{\mathrm{upd}}+1$.
		\IF{$n_{\mathrm{upd}}$ is a multiple of $U$}
		\STATE Set $\theta^-\leftarrow\theta$.
		\ENDIF
		\ENDIF
		\ENDFOR
		\ENDFOR
	\end{algorithmic}
\end{algorithm}

\subsection{Action-Output Dimensionality}

For $n_B$ decision dimensions with $K$ candidate actions per dimension, the Cartesian branch-action set contains
\begin{equation}
	N_{\mathrm{tuple}}
	=
	K^{n_B}
	\label{eq:tuple_complexity}
\end{equation}
pre-projection tuples.
Explicit enumeration of the same symmetric tuple set requires $K^{n_B}$ action-dependent value outputs, whereas the branching representation requires
\begin{equation}
	O_{\mathrm{branch}}
	=
	n_BK
	\label{eq:branch_output_count}
\end{equation}
action-dependent outputs in addition to the shared scalar value stream.

For \method{}, $n_B=4$ and $K=5$, giving $625$ pre-projection tuples and $20$ branch-action outputs.
The eight-branch configuration gives $5^8=390{,}625$ tuples and $40$ branch-action outputs.
These counts describe the symmetric five-choice branch grid before quota projection; multiple tuples may map to the same integer quota allocation.
The Flat DDQN and Flat D3QN comparators use the separate $108$-action catalog specified in Section~\ref{sec:evaluation}.

\section{Simulation Evaluation}\label{sec:evaluation}

\subsection{Evaluation Configuration and Evidence Sets}

The evaluation follows the CBRA model and performance measures defined in Section~\ref{sec:system_problem}.
Each decision interval introduces a fixed number of new UEs, after which the active population evolves through ACB admission, preamble contention, procedure-level resolution, fallback, blocking, and backlog carryover.
The evaluated arrival-load set is $\Lambda=\{10,20,50,100,200\}$, where each value denotes the number of new UEs introduced per decision interval.
Because unresolved UEs remain active through backlog carryover, the active and attempted populations can exceed the nominal arrival load.

Performance measures are recomputed from their accumulated numerators and
denominators according to \eqref{eq:ratio_of_sums}.
For each run, cross-load, seed-sensitivity, and ablation summaries pool
all $100$ decision intervals from epochs $401$--$500$, so
$\Tset_{\mathrm{tail}}=\{401,\ldots,500\}\times\{1,\ldots,100\}$.
The simulator starts each epoch with an empty backlog and static quotas,
while the trainable controllers retain their learned parameters and replay
memory.

The evaluation comprises three complementary analyses.
Cross-load comparison covers the five controller configurations over $\Lambda$ under nominal seed $42$.
Seed sensitivity evaluates \method{} over the same load set using $\mathcal{Z}=\{42,23,16,15,8,4\}$.
The ablation study examines branch-action, reward, warm-up, target-update, and hidden-width variants at loads $20$, $100$, and $200$ under nominal seed $42$.

Table~\ref{tab:sim_params} summarizes the simulation configuration and the
default learning parameters of the trainable controllers.
\begin{table}[!t]\rmfamily\scriptsize
	\centering
	\caption{Simulation and Default Training Configuration}
	\label{tab:sim_params}
	\begin{tabularx}{\columnwidth}
		{@{}>{\raggedright\arraybackslash}p{0.47\columnwidth}Y@{}}
		\toprule
		Parameter & Value \\
		\midrule
		CBRA preambles & $54$ \\
		Arrival loads & $10,20,50,100,200$ \\
		Nominal comparison seed & $42$ \\
		Seed-sensitivity set & $42,23,16,15,8,4$ \\
		M2M probability & $0.8$ \\
		DS probability & $0.3$ for H2H; $0.1$ for M2M \\
		H2H data-amount variable & Discrete uniform on $\{2,\ldots,8\}$ \\
		M2M data-amount variable & Discrete uniform on $\{1,2,3\}$ \\
		Initial access mode & $2\mathrm{RA}$ if data amount $\le 2$; otherwise $4\mathrm{RA}$ \\
		Four-pool static quotas (DS2/DT2/DS4/DT4) & $14/13/14/13$ \\
		$2\mathrm{RA}$/$4\mathrm{RA}$ failure thresholds & $F_2=6$, $F_4=11$ \\
		ACB allow probability & $0.5$ \\
		Default reward profile & Coefficients in \eqref{eq:mild_penalty} \\
		Training epochs & $500$ \\
		Decision intervals per epoch & $100$ \\
		Analysis window & Final $100$ epochs \\
		Warm-up interactions & $2000$ \\
		Learning rate & $3\times10^{-5}$ \\
		Post-warm-up exploration probability & $0.1$ \\
		Mini-batch size & $64$ \\
		Discount factor & $0.90$ \\
		Replay capacity & $5000$ \\
		Target-update period & $500$ gradient updates \\
		Hidden width & $128$ \\
		\bottomrule
	\end{tabularx}
\end{table}
For the trainable controllers, the first $2000$ interactions use uniform
action sampling; subsequent decisions use the fixed exploration probability
listed in Table~\ref{tab:sim_params}.
The same traffic-generation, ACB, access-mode, and contention-resolution model is used throughout the evaluation.

\subsection{Comparator Configuration}

Table~\ref{tab:baseline_surface} defines the five controller configurations used for cross-load comparison and structural analysis.
\begin{table*}[!t]\rmfamily\scriptsize
	\centering
	\caption{Controller Configurations Used in the Evaluation}
	\label{tab:baseline_surface}
	\begin{tabularx}{\textwidth}
		{@{}>{\raggedright\arraybackslash}p{0.16\textwidth}
			>{\raggedright\arraybackslash}p{0.16\textwidth}
			>{\raggedright\arraybackslash}p{0.2\textwidth}
			Y@{}}
		\toprule
		Method & Role & Value representation & Action representation \\
		\midrule
		Static ACB
		& Non-learning comparator
		& Fixed control
		& Fixed ACB and static preamble allocation \\

		Flat DDQN
		& Learning baseline
		& Double DQN
		& Single head over $108$ joint actions \\

		Flat D3QN
		& Learning baseline
		& Dueling Double DQN
		& Single head over the same $108$ joint actions \\

		\method{}
		& Proposed controller
		& Branching Dueling Double DQN
		& Four branches with five actions per branch \\

		B-D3QN-RACH-8
		& Structural comparator
		& Branching Dueling Double DQN
		& Eight branches with five actions each over eight contention groups \\
		\bottomrule
	\end{tabularx}
\end{table*}
Static ACB uses the fixed ACB probability and static preamble allocation.
Flat DDQN and Flat D3QN use the same fixed ACB gate and the same $108$-element joint action catalog over the four preamble pools.
Flat D3QN adds the dueling value--advantage decomposition to the Double DQN architecture.
The catalog is the Cartesian product of the multiplier sets
$\{1,1.2,1.4,1.6\}$ for DS2, $\{0.8,1,1.2\}$ for DT2 and DT4,
and $\{1,1.2,1.4\}$ for DS4, giving $4\times3^3=108$ actions.

\method{} selects one of five multipliers on each of four branches and applies
quota projection before execution, yielding $625$ pre-projection branch-action tuples.
B-D3QN-RACH-8 assigns quotas to eight QoS/access-mode/user-type
contention groups and performs preamble selection separately within each
group. Four-pool quota totals are also recorded for reporting. The
four-pool learning controllers use $28$ observation features; the
eight-group controller uses $35$.
Together, the configurations compare static and learned four-pool
allocation with finer eight-group contention partitioning.

\subsection{Training Behavior and Cross-Load Performance}

Fig.~\ref{fig:training_diagnostics} compares training reward in the upper
row and validation composite score in the lower row across the five arrival
loads under nominal seed $42$. The five plotted epoch positions are
$100$, $200$, $300$, $400$, and $500$.
The validation composite applies the weighted access-quality terms of the training objective to the validation trajectory and provides a consistent view of checkpoint evolution across training.

\begin{figure*}[!t]
	\centering
	\includegraphics[width=0.85\textwidth]{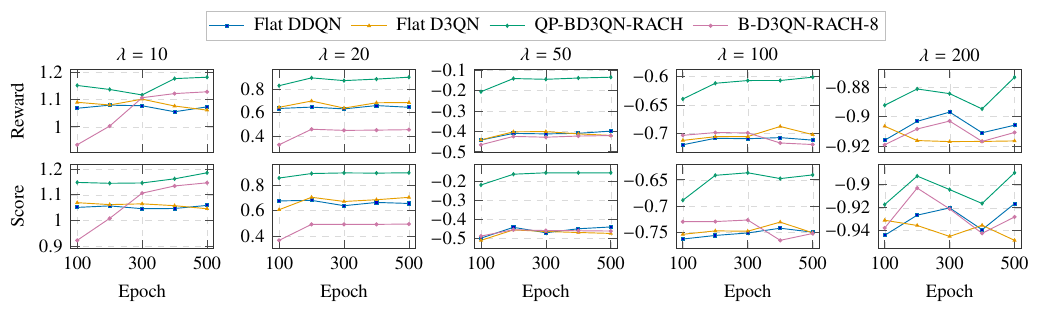}
	\caption{Training reward (upper row) and validation composite score
		(lower row) across the five arrival loads under nominal seed $42$.
		Markers occupy the five plotted epoch positions $100$--$500$.}
	\label{fig:training_diagnostics}
\end{figure*}

Let $m$ denote \method{}. For comparator $b$, metric $k$, and load
$\lambda$, let $R_{k,\lambda}^{m}$ denote the corresponding
tail-window value.
Define $\sigma_k=1$ for higher-is-better metrics and $\sigma_k=-1$ for lower-is-better metrics. The direction-aligned difference is
\begin{equation}
	\Delta_{k,\lambda}^{m-b}
	=
	\sigma_k
	\left(
	R_{k,\lambda}^{m}
	-
	R_{k,\lambda}^{b}
	\right),
	\label{eq:directional_diff}
\end{equation}
and its average over the evaluated load grid is
\begin{equation}
	\bar{\Delta}_{k}^{m-b}
	=
	\frac{1}{|\Lambda|}
	\sum_{\lambda\in\Lambda}
	\Delta_{k,\lambda}^{m-b}.
	\label{eq:mean_directional_diff}
\end{equation}
Positive values therefore indicate better performance by \method{} after accounting for the direction of the metric.

Fig.~\ref{fig:cross_load} shows the five performance measures across the full load range.
Success per attempt and collision per attempt follow the complementarity relation in \eqref{eq:post_cr_identity} and consequently exhibit mirrored trends.
From loads $10$ through $100$, \method{} records the highest success per attempt and the corresponding lowest collision per attempt.
It also records the lowest two-step fallback across all five loads and the lowest successful-access delay at loads $10$, $20$, $100$, and $200$.

\begin{figure*}[t]
	\centering
	\includegraphics[width=0.85\textwidth]{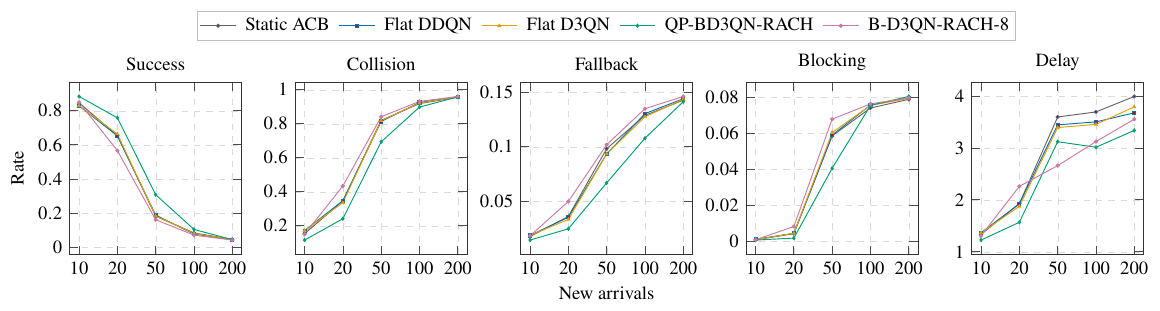}
	\caption{Cross-load performance of the five controller configurations under nominal seed $42$.}
	\label{fig:cross_load}
\end{figure*}

The comparison also reveals two distinct operating points.
At load $50$, B-D3QN-RACH-8 records the lowest successful-access delay, while \method{} records the highest success and the lowest collision, fallback, and blocking.
At load $200$, Static ACB records the highest success and the lowest collision and four-step blocking, while \method{} records the lowest two-step fallback and successful-access delay.
The cross-load curves therefore show how the relative balance among successful resolution, fallback, blocking, and delay changes as contention increases.

Equations~\eqref{eq:directional_diff} and \eqref{eq:mean_directional_diff} transform the five load-wise comparisons in Fig.~\ref{fig:cross_load} to a common better-positive scale and average them over $\Lambda$.
Table~\ref{tab:directional_mean} reports
$100\bar{\Delta}_{k}^{m-b}$ in percentage points for rate measures and
$\bar{\Delta}_{k}^{m-b}$ in decision intervals for delay.
Success and collision have identical direction-aligned differences because of \eqref{eq:post_cr_identity}.

\begin{table}[!t]\rmfamily\scriptsize
	\centering
	\caption{Mean Direction-Aligned Differences Across Five Loads:
		Rates in Percentage Points and Delay in Decision Intervals}
	\label{tab:directional_mean}
	\begin{tabular*}{\columnwidth}
		{@{\extracolsep{\fill}}lrrrr@{}}
		\toprule
		Comparator & Success/Collision & Fallback & Blocking & Delay \\
		\midrule
		Static ACB      & 5.74 & 1.37 & 0.35 & 0.456 \\
		Flat DDQN       & 6.18 & 1.34 & 0.42 & 0.329 \\
		Flat D3QN       & 5.95 & 1.23 & 0.43 & 0.323 \\
		B-D3QN-RACH-8   & 8.21 & 1.92 & 0.68 & 0.128 \\
		\bottomrule
	\end{tabular*}
\end{table}

Table~\ref{tab:directional_mean} shows positive mean differences for all four comparators across the evaluated load grid, with the largest success/collision and fallback differences occurring against B-D3QN-RACH-8.
Fig.~\ref{fig:directional_stats} reports the means with descriptive
$95\%$ percentile-bootstrap intervals obtained from $2000$ resamples,
with replacement, of the five load-wise paired differences. Its rate
axes show unscaled differences, with $0.01$ corresponding to one
percentage point.

\begin{figure*}[t]
	\centering
	\includegraphics[width=0.85\textwidth]{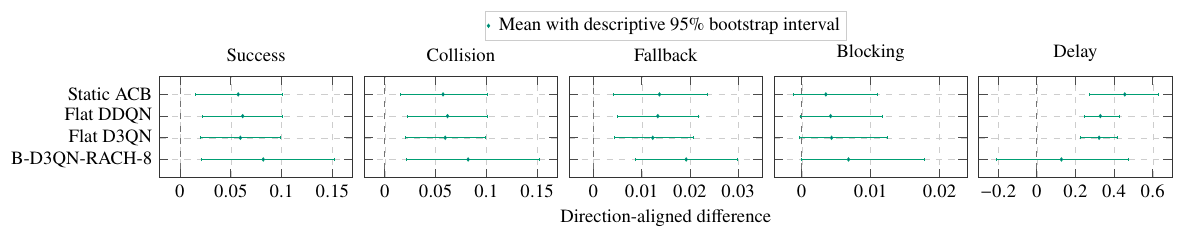}
	\caption{Direction-aligned differences from each comparator under
		nominal seed $42$. Markers show five-load means; horizontal bars show
		descriptive $95\%$ percentile-bootstrap intervals of the means.
		Positive values favor \method{}. Rate differences are unscaled;
		delay differences are measured in decision intervals.}
	\label{fig:directional_stats}
\end{figure*}

\subsection{Seed Sensitivity and Ablation Analysis}

Seed sensitivity of \method{} is evaluated over $\mathcal{Z}=\{42,23,16,15,8,4\}$.
Let $R_{k,\lambda,z}$ denote the tail-window value of \method{} for
metric $k$, load $\lambda$, and seed $z$. The six-seed mean is
\begin{equation}
	\bar{R}_{k,\lambda}
	=
	\frac{1}{|\mathcal{Z}|}
	\sum_{z\in\mathcal{Z}}R_{k,\lambda,z}.
	\label{eq:seed_mean}
\end{equation}

Fig.~\ref{fig:seed_sensitivity} reports the six seed curves and their
mean. At each load, vertical error bars show $95\%$ percentile-bootstrap
intervals of the mean, obtained from $2000$ resamples with replacement
of the six seed-level values.
Seed variation is comparatively small at loads $10$, $20$, $100$, and $200$, whereas load
$50$ exhibits the widest spread, particularly in success/collision and delay.

\begin{figure*}[t]
	\centering
	\includegraphics[width=0.9\textwidth]{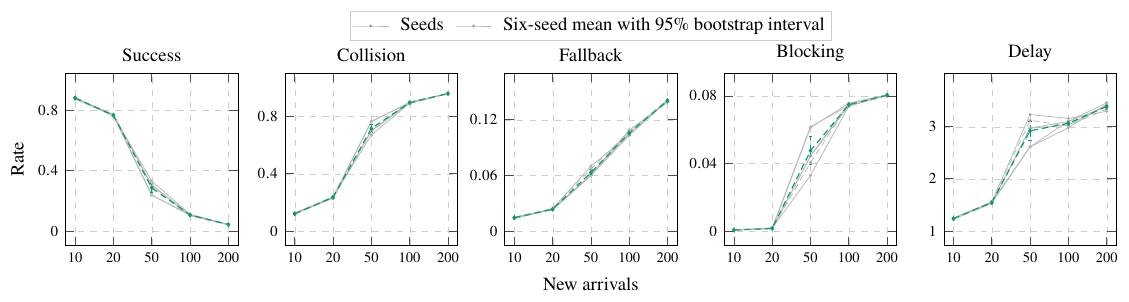}
	\caption{Seed sensitivity of \method{} across six seeds and five
		arrival loads. Gray curves show individual seeds; green diamonds show
		their mean, with vertical bars denoting $95\%$ percentile-bootstrap
		intervals of the mean.}
	\label{fig:seed_sensitivity}
\end{figure*}

The ablation analysis evaluates the ``Branch mild'' grid
$\{0.75,0.875,1,1.125,1.25\}$ and the ``Branch narrow'' grid
$\{0.875,0.9375,1,1.0625,1.125\}$, together with the full-penalty reward,
warm-up $1000$, target-update $200$, and hidden-width $256$ variants at
loads $20$, $100$, and $200$. The full-penalty profile retains the success
coefficients and doubles every penalty coefficient in
\eqref{eq:mild_penalty}.
For variant $v$, define
\begin{equation}
	\Delta_{k,\lambda}^{v-\mathrm{def}}
	=
	\sigma_k
	\left(
	R_{k,\lambda}^{v}
	-
	R_{k,\lambda}^{\mathrm{def}}
	\right).
	\label{eq:ablation_contrast}
\end{equation}
Values above zero favor the variant under the metric-direction convention, whereas values below zero favor the default configuration.

As shown in Fig.~\ref{fig:ablation}, narrowing the branch-action range shifts the operating balance toward selected high-load success, collision, and
blocking measures, while the default branch grid gives stronger low- and mid-load success, fallback, and delay behavior.
The full-penalty reward, shorter warm-up, faster target update, and wider hidden layer produce more localized changes across the three loads.
These load- and metric-dependent tradeoffs support retaining the default
five-choice branch grid, mild-penalty reward, $2000$-interaction warm-up,
$500$-update target period, and hidden width $128$ for the main comparison.

\begin{figure*}[!t]
	\centering
	\includegraphics[width=0.85\textwidth]{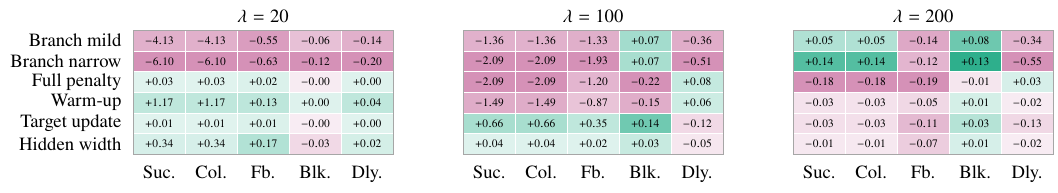}
	\caption{Direction-aligned ablation differences relative to the default \method{} configuration at loads $20$, $100$, and $200$.
		Rate entries are percentage-point differences; delay entries are
		differences in decision intervals. Values are rounded to two decimal places.}
	\label{fig:ablation}
\end{figure*}

\subsection{Mechanism Diagnostics and Representation Structure}

The aggregate performance trends can be related to the underlying access pressure through Fig.~\ref{fig:kpi_backlog}.
The figure compares attempted UEs per preamble, pre-resolution contention
pressure, mean post-interval backlog, and nominal message cost per
successful UE across controller configurations and arrival loads.
Together with Fig.~\ref{fig:cross_load}, these quantities describe how
access pressure, post-interval backlog, and nominal signaling cost vary
across arrival loads.

\begin{figure*}[t]
	\centering
	\includegraphics[width=0.85\textwidth]{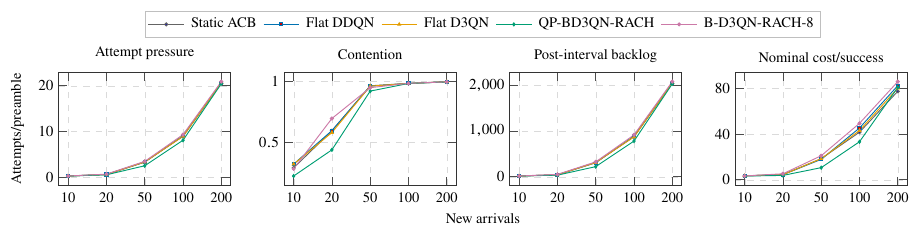}
	\caption{Cross-load access diagnostics: attempted UEs per preamble,
		pre-resolution contention pressure, mean post-interval backlog, and
		nominal message cost per successful UE.}
	\label{fig:kpi_backlog}
\end{figure*}

Fig.~\ref{fig:mode_qos} reports 2RA/4RA success and delay, together with
DT-2RA and M2M-2RA success. Success panels use current access mode and
the corresponding active-user denominators; delay panels group successful
UEs by initial access mode.

\begin{figure*}[t]
	\centering
	\includegraphics[width=\textwidth]{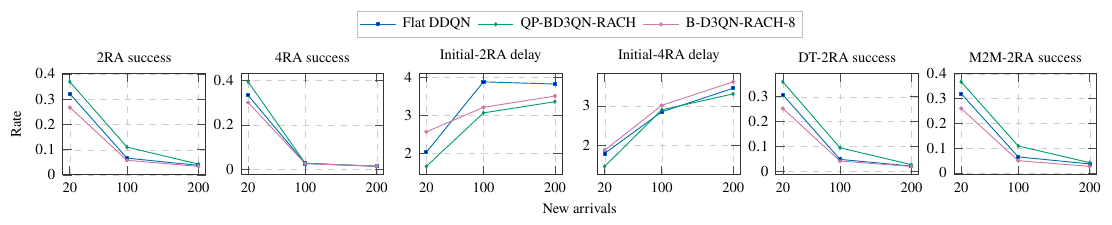}
	\caption{Selected access-mode and traffic measures at loads $20$,
		$100$, and $200$ for Flat DDQN, \method{}, and B-D3QN-RACH-8.
		Success rates use current-mode active populations; delays use
		successful UEs grouped by initial access mode.}
	\label{fig:mode_qos}
\end{figure*}

Fig.~\ref{fig:state_action_reward_correlation} reports tail-window Spearman
associations between aggregate access/composition variables and projected
quota shares, reward, and failure pressure.

\begin{figure}[!t]
	\centering
	\includegraphics[width=0.9\columnwidth]{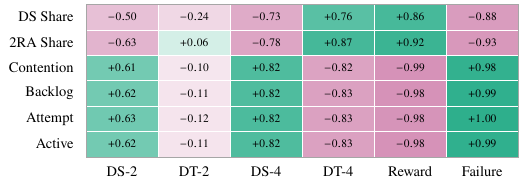}
	\caption{Tail-window Spearman associations between aggregate
		access/composition variables and projected quota shares, reward, and failure
		pressure for \method{}. Values are rounded to two decimal places.}
	\label{fig:state_action_reward_correlation}
\end{figure}

Finally, Table~\ref{tab:complexity} complements the behavioral diagnostics with the value-representation dimensions derived in \eqref{eq:tuple_complexity} and \eqref{eq:branch_output_count}.
For the five-choice grid, \method{} represents $625$ pre-projection branch-action tuples with $20$ branch-action outputs.
The eight-branch configuration uses $40$ outputs for $390{,}625$ tuples, while explicit enumeration of the symmetric four-pool Cartesian reference requires $625$ action-dependent outputs.

\begin{table}[!t]\rmfamily\scriptsize
	\centering
	\caption{Action-Value Output Counts for the Five-Choice Branch Grid}
	\label{tab:complexity}
	\begin{tabular*}{\columnwidth}
		{@{\extracolsep{\fill}}lrr@{}}
		\toprule
		Representation & Explicit outputs & Cartesian tuples \\
		\midrule
		Symmetric flat four-pool & 625 & 625 \\
		\method{} & 20 & 625 \\
		B-D3QN-RACH-8 & 40 & 390,625 \\
		\bottomrule
	\end{tabular*}
\end{table}

\section{Conclusion}\label{sec:conclusion}

This paper presented \method{}, a quota-projected branching reinforcement-learning
controller for finite-budget QoS-aware CBRA preamble slicing. The four action
branches follow the four executable QoS/access-mode quota dimensions, while
deterministic projection converts the selected multipliers to feasible integer
allocations. Across the evaluated load grid, \method{} achieves favorable mean
direction-aligned differences against all four comparators and exhibits
load-dependent performance gains in success, collision, fallback, blocking,
and successful-access delay. At load $200$, Static ACB achieves higher success
and lower collision and four-step blocking, while \method{} maintains lower
two-step fallback and successful-access delay.
The evaluated four-pool configuration combines budget-feasible
allocation with a compact action-value representation and load-dependent
performance tradeoffs.

The current evidence is bounded by the adopted simulation and comparison design.
Cross-method comparisons use nominal seed $42$, and the six-seed
analysis concerns \method{}. The flat and four-branch learning controllers share the same aggregate observation but use different action representations, while B-D3QN-RACH-8 additionally employs a finer user-type-specific observation and contention-group partition.
The CBRA model performs procedure-level
contention resolution and does not include physical-layer capture,
signal-to-interference-plus-noise ratio (SINR),
power control, multi-user decoding, contention-free access, or MsgA/Msg3
decoding. The load-paired intervals and Spearman coefficients provide descriptive
summaries of the evaluated data. Future work should extend the evaluation
to multi-seed cross-method comparisons, broader traffic and arrival processes,
external baseline reproductions, and channel-aware access models.

\bibliographystyle{IEEEtran}
\bibliography{references}

@techreport{3gpp.22.011,
	author = {3GPP},
	institution = {3rd Generation Partnership Project},
	number = {22.011},
	note = {Version 18.6.0},
	year = {2025},
	title = {{Technical Specification Group Services and System Aspects; Service accessibility}},
	type = {Technical Specification (TS)},
	url = {https://www.3gpp.org/dynareport/22011.htm}
}

@article{althumali2022priority,
	title={Priority-based load-adaptive preamble separation random access for {QoS}-differentiated services in {5G} networks},
	author={Althumali, Huda and Othman, Mohamed and Noordin, Nor Kamariah and Hanapi, Zurina Mohd},
	journal={Journal of Network and Computer Applications},
	volume={203},
	pages={103396},
	year={2022},
	publisher={Elsevier},
	doi={10.1016/j.jnca.2022.103396}}

@article{du2020machine,
	title={Machine learning for {6G} wireless networks: Carrying forward enhanced bandwidth, massive access, and ultrareliable/low-latency service},
	author={Du, Jun and Jiang, Chunxiao and Wang, Jian and Ren, Yong and Debbah, Merouane},
	journal={IEEE Vehicular Technology Magazine},
	volume={15},
	number={4},
	pages={122--134},
	year={2020},
	publisher={IEEE},
	doi={10.1109/MVT.2020.3019650}}

@techreport{3gpp.38.300,
	author = {3GPP},
	institution = {3rd Generation Partnership Project},
	number = {38.300},
	note = {Version 18.6.0},
	year = {2025},
	title = {{Technical Specification Group Radio Access Network; NR; NR and NG-RAN Overall Description; Stage 2}},
	type = {Technical Specification (TS)},
	url = {https://www.3gpp.org/dynareport/38300.htm}
}

@techreport{3gpp.38.321,
	author = {3GPP},
	institution = {3rd Generation Partnership Project},
	number = {38.321},
	note = {Version 18.6.0},
	year = {2025},
	title = {{Technical Specification Group Radio Access Network; NR; Medium Access Control (MAC) protocol specification}},
	type = {Technical Specification (TS)},
	url = {https://www.3gpp.org/dynareport/38321.htm}
}

@article{chowdhury2022queue,
	title={Queue-aware access prioritization for massive machine-type communication},
	author={Chowdhury, Mayukh Roy and De, Swades},
	journal={IEEE Internet of Things Journal},
	volume={9},
	number={17},
	pages={15858--15873},
	year={2022},
	publisher={IEEE},
	doi={10.1109/JIOT.2022.3151408}}

@article{lee2021enhanced,
	title={Enhanced random access for massive-machine-type communications},
	author={Lee, Byung-Hyun and Lee, Hyun-Suk and Moon, Seokjae and Lee, Jang-Won},
	journal={IEEE Internet of Things Journal},
	volume={8},
	number={8},
	pages={7046--7064},
	year={2021},
	publisher={IEEE},
	doi={10.1109/JIOT.2020.3038148}}

@article{pacheco2019deep,
	title={Deep reinforcement learning mechanism for dynamic access control in wireless networks handling {mMTC}},
	author={Pacheco-Paramo, Diego and Tello-Oquendo, Luis and Pla, Vicent and Martinez-Bauset, Jorge},
	journal={Ad Hoc Networks},
	volume={94},
	pages={101939},
	year={2019},
	publisher={Elsevier},
	doi={10.1016/j.adhoc.2019.101939}}

@article{jiang2019multiple,
	title={Multiple preambles for high success rate of grant-free random access with massive {MIMO}},
	author={Jiang, Hao and Qu, Daiming and Ding, Jie and Jiang, Tao},
	journal={IEEE Transactions on Wireless Communications},
	volume={18},
	number={10},
	pages={4779--4789},
	year={2019},
	publisher={IEEE},
	doi={10.1109/TWC.2019.2929126}}

@article{he2022cluster,
	title={Cluster-aided collision resolution random access in distributed massive {MIMO} systems},
	author={He, Yuxuan and Ren, Guangliang},
	journal={IEEE Internet of Things Journal},
	volume={9},
	number={13},
	pages={11453--11463},
	year={2022},
	publisher={IEEE},
	doi={10.1109/JIOT.2021.3127936}}

@article{piao2024integrated_ra,
	author={Piao, Yijun and Lee, Tae Jin},
	title={Integrated 2--4 Step Random Access for Heterogeneous and Massive {IoT} Devices},
	journal={IEEE Transactions on Green Communications and Networking},
	volume={8},
	number={1},
	pages={441--452},
	year={2024},
	doi={10.1109/TGCN.2023.3322539},
	publisher={IEEE}}

@article{khairy2022data,
	title={Data-driven random access optimization in multi-cell {IoT} networks using {NOMA}},
	author={Khairy, Sami and Balaprakash, Prasanna and Cai, Lin X and Poor, H Vincent},
	journal={IEEE Transactions on Wireless Communications},
	volume={21},
	number={7},
	pages={4938--4953},
	year={2022},
	publisher={IEEE},
	doi={10.1109/TWC.2021.3134949}}

@article{liva2024unsourced,
	title={Unsourced multiple access: A coding paradigm for massive random access},
	author={Liva, Gianluigi and Polyanskiy, Yury},
	journal={Proceedings of the IEEE},
	volume={112},
	number={9},
	pages={1214--1229},
	year={2024},
	publisher={IEEE},
	doi={10.1109/JPROC.2024.3437208}}

@article{mnih2015human,
	title={Human-level control through deep reinforcement learning},
	author={Mnih, Volodymyr and Kavukcuoglu, Koray and Silver, David and Rusu, Andrei A and Veness, Joel and Bellemare, Marc G and Graves, Alex and Riedmiller, Martin and Fidjeland, Andreas K and Ostrovski, Georg and others},
	journal={Nature},
	volume={518},
	number={7540},
	pages={529--533},
	year={2015},
	publisher={Nature Publishing Group},
	doi={10.1038/nature14236}
}

@article{liu2024enhanced,
  title={Enhanced {RACH} optimization in {IoT} networks: A {DQN} approach for balancing {H2H} and {M2M} communications},
  author={Liu, Xue and Yang, Heng and Li, Shanshan and Liu, Zhenyu and Lian, Xiaohui},
  journal={Internet of Things},
  volume={28},
  pages={101433},
  year={2024},
  publisher={Elsevier},
  doi={10.1016/j.iot.2024.101433}}

@article{pacheco2020delay,
  title={Delay-aware dynamic access control for {mMTC} in wireless networks using deep reinforcement learning},
  author={Pacheco-Paramo, Diego and Tello-Oquendo, Luis},
  journal={Computer Networks},
  volume={182},
  pages={107493},
  year={2020},
  publisher={Elsevier},
  doi={10.1016/j.comnet.2020.107493}}

@article{fan2024joint_delay_energy,
  author={Fan, Wenbo and Fan, Pingzhi and Long, Yan},
  title={Joint Delay-Energy Optimization for Multi-Priority Random Access in Machine-Type Communications},
  journal={IEEE Transactions on Wireless Communications},
  volume={23},
  number={2},
  pages={1416--1431},
  year={2024},
  doi={10.1109/TWC.2023.3289314},
  publisher={IEEE}}

@article{sohaib2022dynamic,
  title={Dynamic Multichannel Access via Multi-Agent Reinforcement Learning: Throughput and Fairness Guarantees},
  author={Sohaib, Muhammad and Jeong, Jongjin and Jeon, Sang-Woon},
  journal={IEEE Transactions on Wireless Communications},
  volume={21},
  number={6},
  pages={3994--4008},
  year={2022},
  publisher={IEEE},
  doi={10.1109/TWC.2021.3126112}}

@article{nie2025noma_rach,
  title={A {NOMA}-Enhanced Two-Step {RACH} Procedure for Low-Latency Access in {5G} Networks},
  author={Nie, Dawei and Yu, Wenjuan and Foh, Chuan Heng and Ni, Qiang},
  journal={IEEE Internet of Things Journal},
  volume={12},
  number={9},
  pages={11568--11580},
  year={2025},
  publisher={IEEE},
  doi={10.1109/JIOT.2024.3521340}}

@article{marinello2025grantfree_ris,
  title={Grant-Free Random Access for {RIS}-Aided Machine-Type Communication},
  author={{Marinello Filho}, Jos{\'e} Carlos and Abr{\~a}o, Taufik and Hossain, Ekram and Mezghani, Amine},
  journal={IEEE Transactions on Wireless Communications},
  volume={24},
  number={9},
  pages={7794--7808},
  year={2025},
  publisher={IEEE},
  doi={10.1109/TWC.2025.3562959}}

@article{zhang2026delay_optimal_ra,
  title={Delay-Optimal Random Access: A Learning Framework},
  author={Zhang, Huaqiang and Zhao, Xinran and Dai, Lin},
  journal={IEEE Transactions on Communications},
  volume={74},
  pages={3059--3073},
  year={2026},
  publisher={IEEE},
  doi={10.1109/TCOMM.2026.3652495}}

@inproceedings{tavakoli2018action_branching,
  title={Action Branching Architectures for Deep Reinforcement Learning},
  author={Tavakoli, Arash and Pardo, Fabio and Kormushev, Petar},
  booktitle={Proceedings of the AAAI Conference on Artificial Intelligence},
  volume={32},
  number={1},
  pages={4131--4138},
  year={2018},
  doi={10.1609/aaai.v32i1.11798}}

@article{jiang2021decoupled_ra,
  author={Jiang, Nan and Deng, Yansha and Nallanathan, Arumugam and Yuan, Jinhong},
  title={A Decoupled Learning Strategy for Massive Access Optimization in Cellular {IoT} Networks},
  journal={IEEE Journal on Selected Areas in Communications},
  volume={39},
  number={3},
  pages={668--685},
  year={2021},
  doi={10.1109/JSAC.2020.3018806},
  publisher={IEEE}}

@inproceedings{tello2018rl_acb,
  title={Reinforcement Learning-Based {ACB} in {LTE-A} Networks for Handling Massive {M2M} and {H2H} Communications},
  author={Tello-Oquendo, Luis and Pacheco-Paramo, Diego and Pla, Vicent and Martinez-Bauset, Jorge},
  booktitle={2018 IEEE International Conference on Communications},
  pages={1--7},
  year={2018},
  organization={IEEE},
  doi={10.1109/ICC.2018.8422167}}

@article{bui2020energy_acb,
  title={Deep Reinforcement Learning-Based Access Class Barring for Energy-Efficient {mMTC} Random Access in {LTE} Networks},
  author={Bui, Anh-Tuan H. and Pham, Anh T.},
  journal={IEEE Access},
  volume={8},
  pages={227657--227666},
  year={2020},
  publisher={IEEE},
  doi={10.1109/ACCESS.2020.3045811}}

@article{bai2021multiagent_ra,
  author={Bai, Jianan and Song, Hao and Yi, Yang and Liu, Lingjia},
  title={Multiagent Reinforcement Learning Meets Random Access in Massive Cellular Internet of Things},
  journal={IEEE Internet of Things Journal},
  volume={8},
  number={24},
  pages={17417--17428},
  year={2021},
  doi={10.1109/JIOT.2021.3081692},
  publisher={IEEE}}

@inproceedings{turan2021adaptive_acb_slicing,
  title={Reinforcement Learning Based Adaptive Access Class Barring for {RAN} Slicing},
  author={Turan, Ali and Koseoglu, Mehmet and Sezer, Ebru Akcapinar},
  booktitle={2021 IEEE International Conference on Communications Workshops},
  pages={1--6},
  year={2021},
  organization={IEEE},
  doi={10.1109/ICCWorkshops50388.2021.9473684}}

@article{jang2021cellular_ra_framework,
  title={Deep Learning-Based Cellular Random Access Framework},
  author={Jang, Han Seung and Lee, Hoon and Quek, Tony Q. S. and Shin, Hyundong},
  journal={IEEE Transactions on Wireless Communications},
  volume={20},
  number={11},
  pages={7503--7518},
  year={2021},
  publisher={IEEE},
  doi={10.1109/TWC.2021.3085303}}

@article{gedikli2022flexible_preamble,
  title={Deep reinforcement learning based flexible preamble allocation for {RAN} slicing in {5G} networks},
  author={Gedikli, Ahmet Melih and Koseoglu, Mehmet and Sen, Sevil},
  journal={Computer Networks},
  volume={215},
  pages={109202},
  year={2022},
  publisher={Elsevier},
  doi={10.1016/j.comnet.2022.109202}}

@article{elmeligy2025preamble_mab,
  title={Preamble Selection Probability Optimization in {RACH}: A Multi-Armed Bandits Approach},
  author={Elmeligy, Ahmed O. and Psaromiligkos, Ioannis and Au, Minh},
  journal={IEEE Open Journal of the Communications Society},
  volume={6},
  pages={10761--10780},
  year={2025},
  publisher={IEEE},
  doi={10.1109/OJCOMS.2025.3647566}}

@inproceedings{vanhasselt2016deep,
  title={Deep Reinforcement Learning with Double {Q}-Learning},
  author={Van Hasselt, Hado and Guez, Arthur and Silver, David},
  booktitle={Proceedings of the AAAI Conference on Artificial Intelligence},
  volume={30},
  number={1},
  pages={2094--2100},
  year={2016},
  doi={10.1609/aaai.v30i1.10295},
  url={https://ojs.aaai.org/index.php/AAAI/article/view/10295}}

@inproceedings{wang2016dueling,
  title={Dueling Network Architectures for Deep Reinforcement Learning},
  author={Wang, Ziyu and Schaul, Tom and Hessel, Matteo and Van Hasselt, Hado and Lanctot, Marc and De Freitas, Nando},
  booktitle={Proceedings of the 33rd International Conference on Machine Learning},
  pages={1995--2003},
  year={2016},
  volume={48},
  series={Proceedings of Machine Learning Research},
  publisher={PMLR},
  url={https://proceedings.mlr.press/v48/wangf16.html}}

@ARTICLE{wang2018deep,
	author={Wang, Shangxing and Liu, Hanpeng and Gomes, Pedro Henrique and Krishnamachari, Bhaskar},
	journal={IEEE Transactions on Cognitive Communications and Networking},
	title={Deep Reinforcement Learning for Dynamic Multichannel Access in Wireless Networks},
	year={2018},
	volume={4},
	number={2},
	pages={257-265},
	doi={10.1109/TCCN.2018.2809722}}

@ARTICLE{nisioti2019fast,
	author={Nisioti, Eleni and Thomos, Nikolaos},
	journal={IEEE Transactions on Cognitive Communications and Networking},
	title={Fast Q-Learning for Improved Finite Length Performance of Irregular Repetition Slotted ALOHA},
	year={2020},
	volume={6},
	number={2},
	pages={844-857},
	doi={10.1109/TCCN.2019.2957224}}

\end{document}